\newtheorem{theorem}{Theorem}[section]
\newtheorem{lemma}[theorem]{Lemma}
\newtheorem{prop}[theorem]{Proposition}
\newtheorem{corollary}[theorem]{Corollary}
\newtheorem{assumption}{Assumption}[section]
\newtheorem{example}[theorem]{Example}
\newtheorem{definition}[theorem]{Definition}
\title{Collusion-Proof Auction Design using Side Information}
\author{%
  Sukanya Kudva
    \\
  University of California, Berkeley\\
  \texttt{sukanya\_kudva@berkeley.edu} \\
  \And
  Edward Dowling \\
  University of California, Berkeley\\
  \texttt{ed\_dowling@berkeley.edu} \\
     \AND
  Anil Aswani \\
  University of California, Berkeley\\
  \texttt{aaswani@berkeley.edu} \\
}
\begin{document}

\maketitle

\begin{abstract}

Existing auction mechanisms are vulnerable to bidder collusion, which substantially degrades revenue and non-colluder welfare. To design truthful mechanisms resilient to collusion, we introduce a novel approach that leverages a machine learning classifier to predict (even imprecisely) which bidders are colluding. We first establish a Bulow-Klemperer-type result for multi-unit auctions with single-minded bidders, demonstrating that collusion significantly harms existing mechanisms only when the colluding coalition is large. Consequently, we focus our design on settings with many colluders. Building on the welfare-optimal Vickrey-Clarke-Groves (VCG) mechanism, we propose two novel truthful mechanisms: VCG-Posted Price (V-PoP) and Conditional-Posted Price (C-PoP). V-PoP applies VCG to non-colluding bidders and posted prices to colluding ones, and ensuring truthfulness is non-trivial because we must dynamically split the quantity of items between these groups based on the values of the non-colluder bids. C-PoP further advances this by computing a posted price conditioned on non-colluder bids, and ensuring truthfulness is non-obvious because the posted price is chosen using the values of the non-colluder bids. Because real-world classifiers make errors, we provide theoretical lower bounds on the auction price of V-PoP and C-PoP under misclassification, which theory shows acts as a proxy for welfare and revenue. Crucially, our bounds yield actionable insights for classifier design, revealing that false negatives (misclassifying colluders as non-colluders) are preferable to false positives (misclassifying non-colluders as colluders). Numerical experiments demonstrate that our mechanisms achieve high welfare and revenue against collusion, even when utilizing simple, low-cost classifiers.
\end{abstract}


\section{Introduction}

It is challenging to design auction mechanisms in settings where bidders may collude. For instance, though the Vickrey-Clarke-Groves (VCG) mechanism \citep{Vickrey_1961, Groves_1973} maximizes the auctioneer's revenue among all welfare-maximizing auctions \citep{Krishna_Perry_1997} and is \textit{dominant strategy incentive compatibility} (DSIC) for all bidders (i.e., truth-telling is the best strategy for the bidders regardless of what others report), it is known to be both theoretically and practically vulnerable to collusion \citep{friedman1964comment,Robinson_1985, hendricks1989collusion,goswami1996collusion}. Posted-price auctions give bidders a fixed take-it-or-leave-it price \citep{goldberg2004collusion-resistant}, and they are collusion-proof when either a single item or infinitely-many items are being bid on. Yet they are vulnerable to collusion when a fixed number (greater than one) of items are being bid on, and worse still is that even in the absence of collusion the posted-price mechanism often performs poorly with respect to welfare \citep{kertz1986stop,blumrosen2008posted}.

Given the general lack of progress on auction mechanisms that generate high welfare and revenue in the presence of collusion, we take a novel approach in this paper: We leverage a machine learning classifier to predict (even imprecisely) which bidders are colluding. Our goal is to then design mechanisms that offer the same price to all bidders who receive an item and can achieve high welfare/revenue even when the classifier is imperfect, and are DSIC for all bidders (even the colluders) when the classifier is perfect. The importance of DSIC under perfect classification is it makes it practically difficult for colluders to manipulate the auction under imperfect classification.

\subsection{Mechanism Design with Side Information} 

Recent literature has increasingly explored the design of mechanisms using side information to bypass the limitations of existing auction designs. Prior work in this domain has primarily focused on side information regarding bidder types -- specifically, parameters related to preferences or item valuations \citep{DevanurSideInfo, Xu_Lu_2022, SPrasadVCG}. For instance, information about correlated types has been leveraged to identify the weakest bidders and enhance the efficiency of the VCG mechanism \citep{SPrasadVCG, Xu_Lu_2022}. Additionally, \cite{DevanurSideInfo} leverage side information to distinguish bidders along a continuum in revenue-maximizing auctions. Beyond standard auctions, learning-augmented algorithms have utilized side information to achieve strategy-proofness in application-specific domains, such as job scheduling \citep{Balkanski_Gkatzelis_Tan_2022} and facility location \citep{Facilitysideinfo, tang_characterization_2020}. Our work is novel because it is the first to consider side information on which bidders are colluding, and our work also expands the area of learning-augmented mechanism design because the side information on collusion may be generated by a machine learning classifier that makes imperfect predictions.


\subsection{Collusion-Resistant Mechanisms}

Relatively little research has focused on designing auction mechanisms robust to bidder collusion. In the strongest collusion model, colluding bidders act to maximize a single, joint utility function, assuming that they can exchange items and execute side payments among themselves. Given the difficulty of designing mechanisms under this model, the literature frequently considers weaker notions of collusion-resistance, such as $t$-truthfulness and group strategy-proofness \citep{goldberg2004collusion-resistant}. Under $t$-truthfulness, where any coalition of size at most $t$ maximizes its expected utility by bidding truthfully, researchers have leveraged consensus techniques to propose approximately efficient and optimal mechanisms \citep{goldberg2004collusion-resistant}. Group strategy-proofness requires that any deviation benefiting one coalition member must strictly harm another \citep{goldberg2004collusion-resistant,Juarez_2013, Basu_Mukherjee_2024}. Mechanisms satisfying this property have been studied across various domains, including auctions \citep{goldberg2004collusion-resistant}, network design games \citep{Pal_Tardos_2003, Cheng_Yu_Zhang_2013, Tang_Yu_Zhao_2020} and queueing models \citep{Kayı_Ramaekers_2010,Mitra_Mutuswami_2011}. Departing from past work, this paper designs auction mechanisms robust to the strongest model of bidder collusion by leveraging side information.

\subsection{Collusion Detection via Machine Learning}

Extensive research has addressed collusion detection using statistical and machine learning techniques, particularly within public procurement auctions \citep{ Conley_Decarolis_2016,Chassang_Kawai_Nakabayashi_Ortner_2022, Lyra_Damásio_Pinheiro_Bacao_2022}. For instance, \cite{Garcia_Rodríguez_Rodríguez-Montequín_Ballesteros-Pérez_Love_Signor_2022} provides a comparative analysis of machine learning approaches, while \cite{pmlr-v180-bonjour22a} proposes a game-theoretic framework based on mutual information. More recently, the focus has expanded to tacit algorithmic collusion \citep{Calvano_Calzolari_Denicolo_Pastorello_2018, Assad_Clark_Ershov_Xu_2024}, prompting the development of novel detection methods based on statistical testing \citep{hespanhol2020hypothesis, Hartline_Long_Zhang_2024}. Building on this extensive and growing body of literature, it is now reasonable to assume that an auctioneer can gain access to a machine learning classifier capable of identifying colluders in an auction. Our work is the first to leverage such classifiers to design inherently collusion-robust auction mechanisms.




\subsection{Contributions and Outline}


Our main contributions include the following: We establish a Bulow-Klemperer-type result \citep{BKThm} for multi-unit auctions with single-minded bidders, demonstrating that collusion significantly harms existing mechanisms only when the colluding coalition is large. Using this insight, we propose two novel, collusion-robust mechanisms: VCG-Posted Price (V-PoP) and Conditional-Posted Price (C-PoP). When the classifier makes no errors, we prove that V-PoP and C-PoP are truthful even under a strong model of collusion. This is important because, as we discuss later, it makes it practically difficult for colluders to manipulate V-PoP and C-PoP when there is misclassification. We provide theoretical lower bounds on the auction price of V-PoP and C-PoP under misclassification, which are a proxy for welfare and revenue. These bounds give insights into the classifier design, implying that false negatives (misclassifying colluders as non-colluders) are preferable to false positives (misclassifying non-colluders as colluders). We also conduct numerical experiments that show the robustness of our mechanisms to collusion even when using simple, low-cost classifiers.

Section \ref{sec:prelim} introduces the auction setting and the model of bidder collusion. Section \ref{sec:vcgcoll} presents our Bulow-Klemperer-type result, motivating our focus on settings with a high number of colluders. In Section \ref{sec:tapc}, we formally define the VCG-Posted Price (V-PoP) and Conditional-Posted Price (C-PoP) mechanisms and study them theoretically. Section \ref{sec:ne} contains numerical experiments that illustrate the strong welfare and revenue performance of our mechanism in the presence of misclassification.

\section{Preliminaries}
\label{sec:prelim}

We consider a single-item, multiple-unit auction, where $r$ identical copies of the same item are being sold. There are $m$ single-minded bidders, each of whom desires at most one item. Bidder $i$ has a private valuation $v_i$ for the item, bids $b_i$ for the item, and is charged payment $p_i$ on winning it. Define the bid $b = (b_1, \cdots, b_\textsc{m})$ and true valuation vectors $v = (v_1, \cdots, v_\textsc{m})$, respectively, and let $\mathsf{M}$ with $|\mathsf{M}| = m$ be the set of all bidders. Designing an auction mechanism comprises choosing two components: an allocation scheme $x_i(b) \in \{0,1\}$, and a payment scheme $p_i(b)\in\mathbb{R}$ for each bidder $i$. These allocation and payment schemes must be specified before the bids are collected, the design of an auction mechanism generally relates to participants' utilities: 

\begin{definition}[Utility of Bidders]
The utility of bidder $i$ is $u_i(b_i, b_{i-}) = v_i x_i(b_i, b_{i-}) - p_i(b_i, b_{i-})$, where $b_{i-}$ are the bids of all other bidders except $i$.
\end{definition}

\begin{definition}[Utility of Auctioneer]
The auctioneer's utility is $u_a(b) = \sum_{i \in \mathsf{M}}p_i(b)$.
\end{definition}
Auctions that maximize welfare and revenue are referred to as \textit{efficient} and \textit{optimal} auctions, respectively. This paper focuses on efficiency while maximizing revenue to the extent possible.

\begin{definition}[Welfare]
Welfare is the total utility of all participants (i.e., all bidders and the auctioneer). Hence, welfare is $\mathtt{Wel}(b)= u_a(b) + \sum_{i \in \mathsf{M}} u_i(b) = \sum_{i \in \mathsf{M}} v_i x_i(b)$. \end{definition}

\begin{definition}[Revenue]
Revenue is the total payment made to the auctioneer: $\mathtt{Rev}(b)= u_a(b) = \sum_{i \in \mathsf{M}}p_i(b)$, and it is the same as the auctioneer's utility in our model.
\end{definition}

We assume bidders are rational in that they maximize a corresponding utility. We assume $\mathsf{M}$ is partitioned into two disjoint subsets: the non-colluding bidders $\mathsf{N}$ and the colluding bidders $\mathsf{C}$, with $\mathsf{M} = \mathsf{N} \cup \mathsf{C}$ and $m = |\mathsf{M}| = |\mathsf{N}| + |\mathsf{C}| = n+c$. Let $b^{\textsc{n}}$ and $b^{\textsc{c}}$ be the vectors of bids submitted by the non-colluding and colluding bidders, respectively. True valuations $v^{\textsc{n}}$ and $v^{\textsc{c}}$ are defined similarly.

\begin{definition}[Model of Non-Colluding Bidders]
We assume that each non-colluding bidder $i\in\mathsf{N}$ submits bid $b_i$ to maximize their utility $u_i(b_i, b_{i-}) = v_i x_i(b_i, b_{i-}) - p_i(b_i, b_{i-})$.
\end{definition}

\begin{definition}[Model of Colluding Bidders]
We assume that the colluding bidders act as a unified entity that maximizes the sum of their individual utilities $u_{\textsc{c}}(b^{\textsc{c}}, b^{\textsc{n}}) = \sum_{i \in \mathsf{C}} v_i x_i(b^{\textsc{c}}, b^{\textsc{n}}) - p_i(b^{\textsc{c}}, b^{\textsc{n}})$, allowing for both information exchange and monetary transfers amongst them.
\end{definition}

We next introduce notation based on order statistics to allow for clear reference to ranked bids and valuations. For any set of bidders $S$, let $b^\textsc{s}_k$ and $v^\textsc{s}_k$ be the $k^{th}$-largest bid and true valuations within $S$, respectively. Define $\mathsf{B}^\textsc{s}_r = \{ b^\textsc{s}_{1} ,\cdots, b^\textsc{s}_{r}\}$ and $\mathsf{V}^\textsc{s}_{r} = \{ v^\textsc{s}_{1} ,\cdots, v^\textsc{s}_{r}\}$ to be the sets of the top $r$ bids and true valuations from $S$, respectively. The complements of these sets are defined as $\overline{\mathsf{B}^\textsc{s}_r} = \mathsf{B}^\textsc{s}_s \setminus \mathsf{B}^\textsc{s}_r = \{ b^\textsc{s}_{r+1} ,\cdots, b^\textsc{s}_{n}\}$ and similarly, $\overline{\mathsf{V}^\textsc{s}_r} = \mathsf{V}^\textsc{s}_s \setminus \mathsf{V}^\textsc{s}_r = \{ v^\textsc{s}_{r+1} ,\cdots, v^\textsc{s}_{n}\}$. 

To differentiate observed bidders' valuations $v$ from their underlying random variables, we use capital letters $V$ to denote the latter. Also, $\mathsf{V}^\textsc{s}_r$ and $\mathcal{V}^\textsc{s}_r$ denote sets of observed and random bidders' valuations. Similarly, notations for observed and random bids and their sets are $b$, $\mathsf{B}^\textsc{s}_r$ and $B$, $\mathcal{B}^\textsc{s}_r$, respectively.

\section{Performance of VCG with Collusion}

\label{sec:vcgcoll}

In an auction with $r$ identical items, the VCG mechanism prescribes allocation $x_i(b) = \mathbf{1}\left(b_i \in \mathsf{B}^{\textsc{m}}_r\right)$, and payment $p_i(b) = b^{\textsc{m}}_{r+1} \mathbf{1}\left(b_i \in \mathsf{B}^{\textsc{m}}_r\right)$ for each bidder $i$. 
In the absence of collusion, the mechanism ensures (ex-post) DSIC by guaranteeing each winning bidder $i$ an information rent of $v_i - p_i(b)$, which incentivizes truth-telling ($b=v$). This property, in turn, guarantees efficiency since the auction maximizes the welfare $\sum_{i} v_i x_i(b)$ and the items are given to bidders who most value them. However, performance degrades when some bidders collude. Here, we characterize the effects of colluding bidders on welfare and revenue when the VCG auction is conducted without accounting for collusion. 

The phenomena of bid shading (bidding a value lower than the true item valuation) and shill bidding (a single bidder entering the auction as multiple bidders) have been studied by \cite{sher_optimal_2012}: Substitute items incentivize integration and bid shading (as in our paper), while complement items incentivize disintegration (shill bidding) and possibly higher bids. In the next lemma, we adapt these results to our setting. For completeness, we also provide a simplified proof in the Appendix.

\begin{prop}
\label{thm:VCGeq}
Let $r_\textsc{c}^{\scriptscriptstyle Col}$ $(r_\textsc{c}^*)$, $r_{\textsc{n}}^{\scriptscriptstyle Col}$ $(r_{\textsc{n}}^*)$ and $b^{\scriptscriptstyle Col}$ $(b^*)$ be the number of items allocated to colluding bidders, the number of items allocated to non-colluding bidders, and the optimal bids in the presence (absence) of collusion, respectively, using the VCG mechanism. Then, the following holds:
\begin{enumerate}[leftmargin=*]
    \item Colluding bidders do not obtain more items through collusion (i.e., $r_\textsc{c}^{\scriptscriptstyle Col} \leq r_\textsc{c}^*$). They shade their bids and never bid load (i.e., $b^{\textsc{c}, \scriptscriptstyle Col}_i \leq v^\textsc{c}_i$ for $i \in C$), bidding either $0$ or their true valuation $v^\textsc{c}_i$.
    \item The utility of each non-colluding bidder and of the collective of colluding bidders does not decrease due to collusion, meaning that $u_i(b^*) \leq u_i(b^{\scriptscriptstyle Col})$ for $i \in N$ and $u_\textsc{c}(b^*) \leq u_\textsc{c}(b^{\scriptscriptstyle Col})$. 
    \item Welfare and revenue are non-increasing, meaning $\mathtt{Wel}(b^*) \geq \mathtt{Wel}(b^{\scriptscriptstyle Col})$ and $\mathtt{Rev}(b^*) \geq \mathtt{Rev}(b^{\scriptscriptstyle Col})$. The revenue of the auctioneer is hence non-increasing, meaning that $u_a(b^*) \geq u_a(b^{\scriptscriptstyle Col})$.
\end{enumerate}
\end{prop}
In our setting, the losing colluding bidders optimally shade bids to zero, effectively behaving as a single non-competitive bidder. This leads to an interesting observation that adding non-colluding bidders improves welfare and revenue, and we use this observation to prove a Bulow-Klemperer type result.  Let $\mathtt{Wel}_{\textsc{VCG}}(\mathsf{N}, \mathsf{C})$ be the welfare of the VCG mechanism when $\mathsf{N}$ is the set of non-colluders and $\mathsf{C}$ is the set of colluders, and let $\mathtt{Wel}_{\textsc{OPT}}(\mathsf{M})$ be the welfare of a mechanism that is efficient (but possibly not truthful) for a set of bidders $\mathsf{M}$ that may or may not include colluders.

\begin{prop}\label{thm:BK} If the bidder valuations $v_i$ are independent and identically distributed (i.i.d.), then
\begin{equation}
    \mathbb{E}\bigl[\mathtt{Wel}_{\textsc{VCG}}(\mathsf{N}\cup \mathsf{N}', \mathsf{C})\bigr] \geq \mathbb{E}\bigl[\mathtt{Wel}_{\textsc{OPT}}(\mathsf{N} \cup \mathsf{C})\bigr]  
\end{equation}
where $\mathsf{N}'$ is a set of additional non-colluding bidders such that $|\mathsf{N}'| = |\mathsf{C}|$.
\end{prop}
The Bulow-Klemperer theorem \citep{BKThm} says that the effort of designing optimal auctions is better spent on recruiting an additional bidder. Our result has a similar interpretation: Recruiting $\bigl|\mathsf{C}\bigr|$ more non-colluding bidders improves welfare and revenue when compared to an efficient (collusion-aware) auction mechanism. This result can also be interpreted as implying that collusion harms the revenue and welfare of a VCG auction only when the colluding coalition is large. As a result, in the rest of this paper we focus on auction design for settings with many colluders.

\section{Designing Truthful Auctions under Collusion}
\label{sec:tapc}
Since colluding bidders may misreport their values under the VCG mechanism, we focus on designing new mechanisms that use side information on which bidders are colluding in order to remain truthful even in the presence of collusion. Because of the side information, we can design the auction assuming that we are able to split bidders into the two groups of colluding and non-colluding.

A number of adaptive auction mechanisms compute a price by splitting bidders into two groups, whereby a price or reserve price computed using the bids of the first group is applied to bids in the second group. Because the bids in the second group are not used to compute the price, truthfulness is trivially achieved. In some such mechanisms, bidders in the first group are ineligible to win an item \citep{Dhangwatnotai_Roughgarden_Yan_2015}, whereas in other such mechanisms, bidders in the second group are used to compute a price that is applied to the first group \citep{Goldberg_Hartline_Karlin_Saks_Wright_2006}. We disfavor the first because it reduces efficiency by choosing bidders who are \emph{by design} not allowed to receive an item, and we disfavor the second because it means that different prices are applied to different bidders. 

In this section, we introduce a new class of auctions that we call the Conditional Posted Price (C-PoP) mechanism. Our mechanism uses a single price for all items received, and it does not disallow (by design) any bidders from receiving an item. Our design differs from existing posted price mechanisms in two key ways. First, we determine the posted price using the bids of non-colluding bidders (hence the name ``conditional" posted price). Second, we charge a uniform price to all winning bidders, regardless of their subgroup. Ensuring truthfulness is consequently nontrivial for our mechanism, which can be better appreciated with two examples:

\begin{example}
\label{ex:1}
A mechanism that computes a posted price $p$ that is (even implicitly) a function of the values of all non-colluder bids $b^{\textsc{n}}$ is not truthful. This is because a winning non-colluding bidder $i$ may be able strategically bid $b_i$ in a way that lowers the posted price $p$ and hence increases their utility $u_i = v_i - p$. Thus the posted price of a truthful mechanism must be chosen in such a way that any winning bidder cannot affect this price, which has not been previously done in the literature.
\end{example}

\begin{example}
\label{ex:2}
A mechanism that randomly allocates the $r$ items to all bids from $\mathsf{N}$ and $\mathsf{C}$ above the posted price $p$ is not truthful. This is because the colluders could strategically overbid to increase their expected allocation without affecting the price. For instance, consider $r=2$, and suppose the non-colluding bids are $(1,2)$ and the colluding valuations are $(1,2,3)$. If the price is $p=1$, then the colluders will bid $(3,3,3)$ to strictly increase their expected number of allocated items.
\end{example}%


\subsection{Conditional-Posted Price (C-PoP) Mechanism}

Algorithm \ref{alg:cpop} outlines our C-PoP mechanism. This mechanism first runs a posted price auction at price $p^*$ for the non-colluding bidders $\mathsf{N}$, and then uses the remaining items to run a second posted price auction at the same price $p^*$ for the colluding bidders $\mathsf{C}$. It employs a dynamic programming (DP) oracle, outlined in Algorithm \ref{alg:kporacle}, to compute an optimal posted price \(p^*\) and a corresponding value \(k^*\). The DP oracle is designed in such a way that the price $p^*$ is computed using only the values of the smallest \(n - k^*\) non-colluding bids, and it picks the price \(p^*\) to optimize over an auctioneer's objective (e.g., welfare or revenue maximization). We provide simpler alternatives to the DP oracle in Appendix~\ref{app:oracle}, but these alternatives generally generate less welfare/revenue than the DP oracle.

\begin{algorithm}[t]
    \caption{Conditional-Posted Price (C-PoP) Mechanism}
    \label{alg:cpop}
    \begin{algorithmic}[1]
        \State \textbf{Input:} Total number of items $r$; non-colluding bidders $\mathsf{N}$ and colluding bidders $\mathsf{C}$ with sealed/unrevealed bids $b^{\textsc{n}}$ and $b^{\textsc{c}}$; Oracle$(b^{\textsc{n}},n,c)$
        \State \textbf{Output:} Allocation $x_i$ and payment $p_i$ for each bidder $i \in \mathsf{N} \cup \mathsf{C}$.
        \vspace{5pt}
        \State Calculate  $(k^*,p^*) \leftarrow$Oracle$(b^{\textsc{n}},n,c)$.
        \vspace{5pt}
        \State \textbf{Phase 1: Run Posted Price on $\mathsf{N}$ with $p^*$}
        \State Qualifying non-colluding bidders: $\hat{\mathsf{W}}_\textsc{n} \gets \{i \in \mathsf{N} \mid b^{\textsc{n}}_i > p^*\}$.
        \If{$|\hat{\mathsf{W}}_{\textsc{n}}| > r$}
            \State Winning non-colluding bidders: randomly select a subset $\mathsf{W}_{\textsc{n}} \subseteq \hat{\mathsf{W}}_{\textsc{n}}$ of size $r$.
        \Else
            \State $\mathsf{W}_{\textsc{n}} \gets \hat{\mathsf{W}}_{\textsc{n}}$.
        \EndIf
        \vspace{5pt}
        \State \textbf{Phase 2: Run Posted Price on $\mathsf{C}$ with remaining $\max\{r-|\mathsf{W}_{\textsc{n}}|, 0\}$ items}
        \State Qualifying colluding bidders: $\hat{\mathsf{W}}_{\textsc{c}} \gets \{i \in \mathsf{C} \mid b_i^{\textsc{c}} > p^*\}$.
        \If{$|\hat{\mathsf{W}}_{\textsc{c}}| > \max\{r-|\mathsf{W}_{\textsc{n}}|, 0\}$}
            \State Winning colluding bidders: randomly select a subset $\mathsf{W}_{\textsc{c}} \subseteq \hat{\mathsf{W}}_{\textsc{c}}$ of size $\max\{r-|\mathsf{W}_{\textsc{n}}|, 0\}$.
        \Else
            \State $\mathsf{W}_{\textsc{c}} \gets \hat{\mathsf{W}}_{\textsc{c}}$.
        \EndIf
        \vspace{5pt}
        \State Winners: $\mathsf{W} \gets \mathsf{W}_\textsc{n} \cup \mathsf{W}_\textsc{c}$
        \State Allocate items: $x_i \gets 1$ for $i \in \mathsf{W}$, $0$ otherwise.
        \State Charge price: $p_i \gets p^*$ for $i \in \mathsf{W}$.
    \end{algorithmic}
\end{algorithm}

\begin{algorithm}
    \caption{Dynamic Programming Oracle($b^{\textsc{n}}, n, c$)}
    \label{alg:kporacle}
    \begin{algorithmic}[1]
        \State \textbf{Input:} Total number of items $r$; number of non-colluding bidders and colluding bidders $n$, $c$ respectively;  
        non-colluding bids $b^{\textsc{n}}$; function $M(p, \overline{\mathsf{B}^\textsc{n}_k})$. 
        \State \textbf{Output:} \(k^*, p^*\)
        \vspace{5pt}
        \For{$k \in \{0,\cdots, r\}$}
                \State $p_k = \arg \max_{p \ge b^{\textsc{n}}_{k+1}} M(p,\overline{\mathsf{B}^\textsc{n}_k})$
        \EndFor
        \vspace{5pt}
        \State \textbf{Phase 1: Compute value functions $V(p_k, \overline{\mathcal{B}^\textsc{n}_{k}})$}
        \State Set $V(p_0, \overline{\mathcal{B}^\textsc{n}_{0}}) = M(p_0, \overline{\mathcal{B}^\textsc{n}_{0}}) = M(p_0, \mathcal{B}^\textsc{n}_{n})$\;
    \For{$k = 1, \ldots, r$}
        \State Compute 
        $V(p_k, \overline{\mathcal{B}^\textsc{n}_k}) =
        \max \Big\{
            M(p_k, \overline{\mathcal{B}^\textsc{n}_k}),\;
            \mathbb{E}\left[V(p_{k-1}, \overline{\mathcal{B}^\textsc{n}_{k-1}}) \mid \overline{\mathcal{B}^\textsc{n}_{k}}\right]
        \Big\}
        $
    \EndFor
\vspace{5pt}
\State \textbf{Phase 2: Compute optimal split $k^*$}
        
        \State Set $k^* \leftarrow r$\;
        \While{$M(p_k, \overline{\mathsf{B}^\textsc{n}_k})$ < 
        $\mathbb{E}\left[V( p_{k-1}, \overline{\mathcal{B}^\textsc{n}_{k-1}}) \mid \overline{\mathsf{B}^\textsc{n}_k}\right]$
    }
    \vspace{1pt}
            \State $k^* \leftarrow k^*-1$
        \EndWhile
        \State $p^* \leftarrow p_{k^*}$
    \end{algorithmic}
\end{algorithm}

In the theorem below, we show that the C-PoP mechanism is truthful for all bidders when the DP oracle is used. We overcome the limitation described in Example \ref{ex:1} by designing the DP oracle so that the price \(p\), computed conditional on a non-colluding bid \(b\), is always chosen to satisfy \(p \ge b\). The oracle incrementally expands the set of conditioned non-colluding bids, proceeding from lowest to highest, and commits to a price that is at least as large as each of them. In particular, the dynamic programming approach at each stage \(k\) chooses between stopping and taking the current value \(M\bigl(p_k, \overline{\mathsf{B}^{\textsc{n}}_k}\bigr)\), or continuing to \(k-1\) to obtain the anticipated future value 
\(
\mathbb{E}\!\left[V\bigl(p_{k-1}, \overline{\mathsf{B}^{\textsc{n}}_{k-1}}\bigr)\,\middle|\, \overline{\mathsf{B}^{\textsc{n}}_k}\right].
\) Choosing the latter option commits to \(k^* \leq k-1\), implying that the bidder with bid \(b^{\textsc{n}}_k\) does not win. This ensures that if a non-colluding bidder misreports their true valuation, they either incur negative utility, or the resulting price is at least as large as their reported bid. We overcome the limitation described in Example \ref{ex:2} by allocating, to the colluding bidders, the items \emph{remaining} after a posted-price auction is conducted among the non-colluding bidders. As a result, both the allocation and pricing rules for colluding bidders are independent of their bids, which ensures truthfulness.

\begin{theorem}[C-PoP Truthfulness] \label{Lemma:dsic}
    The C-PoP mechanism implemented with the DP oracle is (ex-post) DSIC for all bidders, including colluding bidders $\mathsf{C}$, when $\mathsf{N}$ and $\mathsf{C}$ are correctly specified. 
\end{theorem}
However, a given classifier may generate false positives (misclassifying non-colluders as colluders) and false negatives (misclassifying colluders as non-colluders). Let $\mathsf{N}_{\textsc{t}}$, $\mathsf{N}_{\textsc{f}}$, $\mathsf{C}_{\textsc{t}}$, and $\mathsf{C}_{\textsc{f}}$ be the correctly and incorrectly classified non-colluding bidders and colluding bidders, respectively, and suppose C-PoP using the DP oracle is applied to the predicted set of non-colluders $\widehat{\mathsf{N}} = \mathsf{N}_{\textsc{t}} \cup \mathsf{C}_{\textsc{f}}$ and the predicted set of colluders $\widehat{\mathsf{C}} = \mathsf{N}_{\textsc{f}}\cup\mathsf{C}_{\textsc{t}}$. The above theorem about the truthfulness of C-PoP along with the complex structure of the DP oracle in Algorithm \ref{alg:kporacle} suggests that it will be practically difficult for colluders to manipulate the auction. Indeed, the colluders would not know which of its bidders have been correctly or incorrectly classified by the auctioneer's classifier, and so the colluder will have limited strategic capabilities in changing the auction price or allocation.

Our next result shows that C-PoP provides additional robustness guarantees under misclassification, namely that it remains DSIC for all non-colluders and ensures a lower bound on the auction price. 
\begin{prop}\label{prop:pricebound}
    Suppose C-PoP using the DP oracle is applied to the predicted set of non-colluders $\widehat{\mathsf{N}} = \mathsf{N}_{\textsc{t}} \cup \mathsf{C}_{\textsc{f}}$ and the predicted set of colluders $\widehat{\mathsf{C}} = \mathsf{N}_{\textsc{f}}\cup\mathsf{C}_{\textsc{t}}$. Then C-PoP remains (ex-post) DSIC for all non-colluders $\mathsf{N}$. And if $|\mathsf{N}_{\textsc{t}}| \geq r+1$, then we have $p^* \geq b^{N_{\textsc{t}}}_{r+1}$ for the posted price.
\end{prop}


This result shows that C-PoP is robust in performance because it ensures that the auction price, which acts as a proxy for welfare and revenue, is bounded from below. The result also provides guidance for classifier design: It is better to generate false negatives than it is to generate false positives.

\subsection{VCG-Posted Price (V-PoP) Mechanism}


When the price points $p_{\scriptscriptstyle k} = \arg\max_{p \,\ge\, b^{\textsc{n}}_{k+1}} M(p, \overline{\mathsf{B}^{\textsc{n}}_{\scriptscriptstyle k}})$ are instead set to \(p_{\scriptscriptstyle k} = b^{\textsc{n}}_{\scriptscriptstyle k+1}\) in Algorithm \ref{alg:kporacle}, the C-PoP mechanism reduces to a VCG auction over the non-colluding bidders, followed by a posted-price auction for the colluding bidders at the chosen \(k^*\). This follows because the price becomes \(p_{\scriptscriptstyle k^*} = b^{\textsc{n}}_{\scriptscriptstyle k^*+1}\), which coincides with the VCG price when \(k^*\) items are allocated to the non-colluding bidders. The same price is then used to run a posted-price auction among the colluding bidders. We refer to this special case as the VCG–Posted Price (V-PoP) mechanism. When the number of non-colluders is large, V-PoP performs comparably to C-PoP. The intuition is that a larger pool of non-colluding bidders induces a finer coverage of the price space through the discrete price points \(p_{\scriptscriptstyle k}\) determined by their bids. Consequently, the additional price optimization $p_{\scriptscriptstyle k} = \arg\max_{p \,\ge\, b^{\textsc{n}}_{k+1}} M(p, \overline{\mathsf{B}^{\textsc{n}}_{\scriptscriptstyle k}})$ in the C-PoP mechanism provides limited marginal benefit and can effectively be neglected.

\subsection{Expected Welfare Optimization for Known Bid Distribution}
In the rest of the paper, we focus on optimizing C-PoP to achieve good welfare guarantees. Let $\mathtt{Wel}_{\textsc{C-PoP}}(\mathsf{N}, \mathsf{C}; p)$ and $\mathtt{Rev}_{\textsc{C-PoP}}(\mathsf{N}, \mathsf{C}; p)$ be the welfare and revenue, respectively, from the C-PoP mechanism when the price is set at \(p\). The key idea of our mechanism is to set the value of $(k^*,p^*)$ using (conditional) expected welfare (or an appropriate proxy thereof) as $M(p, \overline{\mathsf{B}^\textsc{n}_{\scriptscriptstyle k}})$ in the DP oracle, given in Algorithm \ref{alg:kporacle}, to achieve strong welfare guarantees. We use prior information about the distribution of bidders' valuations and the observed bids $b^{\textsc{n}}$ to calculate the (conditional) expected welfare at a price \(p\) as in the next lemma. We use the following assumption for this computation.
 
\begin{assumption}\label{assumption:iid}
Suppose the bidders' valuations are i.i.d. from an invertible cumulative distribution function (CDF) $\mathsf{F}(\cdot)$, with the inverse as the quantile function $\mathsf{Q}(\cdot)$, that is known to the auctioneer. Sufficient conditions for invertibility of $\mathsf{F}(\cdot)$ are monotonicity and continuous differentiability. 
\end{assumption}

\begin{prop}[Conditional Expected Welfare] \label{lemma:conEW}
   If Assumption \ref{assumption:iid} holds, then the expected welfare given the price \(p\) from the C-PoP mechanism can be calculated as follows:
    \begin{multline}\label{eqn:ecw}     \mathbb{E}\bigl[\mathtt{Wel}_{\textsc{C-PoP}}(\mathsf{N}, \mathsf{C}; p) \mid p \ge b^{\textsc{n}}_{\scriptscriptstyle k+1},  \overline{\mathsf{B}^\textsc{n}_{\scriptscriptstyle k}} \bigr] =  \mathbb{E} \bigl[ \mathsf{Q}(U) \bigm| U \geq \mathsf{F}(p) \bigr] \times \\
    \textstyle\mathbb{E}\Bigl[ w_{\textsc{n}}+ \sum_{\scriptscriptstyle \nu=1}^{\scriptscriptstyle r-w_{\textsc{n}}-1}\nu\cdot \mathbb{P}(w_{\textsc{c}}= \nu) +\sum_{\scriptscriptstyle \nu=r}^{\scriptscriptstyle c} \tfrac{\mathbb{E}[\sum_{i=1}^{\scriptscriptstyle r-w_{\textsc{n}}}\mathsf{Q}(U_i^{\nu}) \mid  U^{\nu}_{\nu} > \mathsf{F}(p)]}{\mathbb{E}[\mathsf{Q}(U) \mid U \ge \mathsf{F}(p)]} \cdot\mathbb{P}(w_{\textsc{c}} = \nu)\Bigr],
\end{multline}
where \(\textstyle U \sim \mathrm{Uniform}[0,1]\), \(w_{\textsc{n}} \sim \mathrm{Binom}(k, \tfrac{1-\mathsf{F}(p)}{1-\mathsf{F}(b^{\textsc{n}}_{k+1})})\) and \(w_{\textsc{c}} \sim \mathrm{Binom}(c, 1-\mathsf{F}(p))\). 
\end{prop}
C-PoP has an interesting connection to VCG when there are no colluding bidders:
\begin{prop}\label{lemma:VCG-PoPeq}
Suppose Assumption \ref{assumption:iid} holds, and we set $M(p_k, \overline{\mathsf{B}^\textsc{n}_k}) = \mathbb{E}[\mathtt{Wel}_{\textsc{C-PoP}}(\mathsf{N}, \mathsf{C}; p) \mid p \ge b^{\textsc{n}}_{\scriptscriptstyle k+1}, \overline{\mathsf{B}^\textsc{n}_{\scriptscriptstyle k}}]$. If there are no colluders, then C-PoP is equivalent to VCG on the non-colluders \(\mathsf{N}\). 
\end{prop}

In the next theorem, we derive welfare guarantees for C-PoP and V-PoP:

\begin{theorem}[Expected Welfare Optimization]\label{Theorem:hvcgWelfare}
If Assumption \ref{assumption:iid} holds, and we set $M(p_k, \overline{\mathsf{B}^\textsc{n}_k}) = \mathbb{E}[\mathtt{Wel}_{\textsc{C-PoP}}(\mathsf{N}, \mathsf{C}; p) \mid p \ge b^{\textsc{n}}_{\scriptscriptstyle k+1}, \overline{\mathsf{B}^\textsc{n}_{\scriptscriptstyle k}}]$, then C-PoP and V-PoP guarantee
\begin{equation}
\mathbb{E}\bigl[\mathtt{Wel}_{\textsc{C-PoP}}(\mathsf{N}, \mathsf{C}; p^*)\bigr]
\ge \mathbb{E}[\mathtt{Wel}_{\textsc{V-PoP}}(\mathsf{N}, \mathsf{C}; p^*)] \ge
\mathbb{E}\bigl[\mathtt{Wel}_{\textsc{VCG}}(\mathsf{N}, \emptyset)\bigr].    
\end{equation}
\end{theorem}
As the number of non-colluders grows large, C-PoP’s performance converges to that of VCG where no bidders collude, showing that our mechanism asymptotically closes the welfare gap due to collusion.
\begin{corollary}\label{corro:limits} When number of items $r$ is fixed and $\mathsf{F}(\cdot)$ has a bounded support, then we have that $\textstyle\lim_{n \to \infty} \mathbb{E}[\mathtt{Wel}_{\textsc{VCG}}(\mathsf{N}\cup \mathsf{C},\emptyset)] - \mathbb{E}[\mathtt{Wel}_{\textsc{C-PoP}}(\mathsf{N},\mathsf{C}; p^*)] = 0$. 
\end{corollary}

\subsection{Expected Welfare Optimization for Partially-Known Bid Distribution}
In some instances, $\mathsf{F}(\cdot)$ and $\mathsf{Q}(\cdot)$ may only be partially known. We can still use our C-PoP mechanism using a minorant of the expected welfare and optimize that using the following assumption:

\begin{assumption}\label{Assump:Qfunc}
Suppose the auctioneer knows value $L > 0$ such that $Lx \leq \mathsf{Q}(x)$ for all $x \in [0,1]$.
\end{assumption}
The above assumption is satisfied by many distributions. For example, distributions with bounded support starting from a nonnegative value and bounded from above as $f(V) \leq \frac{1}{L}$ for all $V$, where $f(\cdot)$ is the probability density function, satisfy it. Using Assumption \ref{Assump:Qfunc}, we can construct minorants for \(\mathbb{E}\left[\mathtt{Wel}_{\textsc{C-PoP}}(\mathsf{N},\mathsf{C}; p)\right]\), the details of which are deferred to Appendix~\ref{app:minorant}

\section{Numerical Experiments}
\label{sec:ne}
\begin{algorithm}[t]
    \caption{Bid Manipulation in C-PoP with Misclassification}
    \label{alg:cpop-manipulation}
    \begin{algorithmic}[1]
        \State \textbf{Input:} Total number of items $r$; $\mathsf{N}_\textsc{f}$, $\mathsf{N}_\textsc{t}$, $\mathsf{C}_\textsc{f}$, $\mathsf{C}_\textsc{t}$ with bids $b^{\mathsf{N}_\textsc{f}}$, $b^{\mathsf{N}_\textsc{t}}$, $b^{\mathsf{C}_\textsc{f}}$, $b^{\mathsf{C}_\textsc{t}}$. Note that $\mathsf{N}_\textsc{t} \cup \mathsf{C}_\textsc{f}$ are classified as non-colluders and $\mathsf{N}_\textsc{f} \cup \mathsf{C}_\textsc{t}$ are classified as colluders.
        \State \textbf{Output:} Manipulated bids $b^{\mathsf{C}_\textsc{f},\text{manip}}$, $b^{\mathsf{C}_\textsc{t},\text{manip}}$.
        \vspace{5pt}
        \For{$i_c \in \{0, 1, \cdots, |\mathsf{C}_\textsc{f}|\}$}
            \State $b^{\mathsf{C}_\textsc{f},\text{manip}} \gets$ Set the bottommost $i_c$ bids in $b^{\mathsf{C}_\textsc{f}}$ to $0$.
            \State Run C-PoP using $(b^{\mathsf{C}_\textsc{f},\text{manip}} \cup b^{\mathsf{N}_\textsc{t}})$ to find optimal $p^*$.
            \State $b^{\mathsf{C}_\textsc{t},\text{manip}} \gets$ Given $p^*$ and using $b^{\mathsf{N}_\textsc{f}}$ and $b^{\mathsf{C}_\textsc{t}}$, calculate overbidding strategy for bids $b^{\mathsf{C}_\textsc{t}}$. An overbidding strategy consists of a decision variable $q_c \leq |\mathsf{C}_{\textsc{T}}|$ for how many correctly classified colluders to bid above $p^*$ to maximize expected welfare.
            \State $u_{\textsc{c}}(i_\textsc{c},b^{\mathsf{C},\text{manip}}) := \mathbb{E}[\sum_{i \in C}x_i(v_i - p)] \gets$ Run C-PoP auction to allocate items and calculate the sum of colluders' surpluses.
        \EndFor
        \vspace{5pt}
        \State $i_c^* \gets \arg\max$ $u_{\textsc{c}}(i_\textsc{c},b^{\mathsf{C},\text{manip}})$
        \State 
        Choose the corresponding $b^{\mathsf{C_T},\text{manip}},b^{\mathsf{C_F},\text{manip}}$ 
    \end{algorithmic}
\end{algorithm}


In this section, we conduct numerical experiments to compare the peformance of C-PoP with the DP oracle, using classifiers of varying levels of sophistication, to two benchmark auction mechanisms:
\begin{enumerate}[leftmargin=*]
    \item \textit{VCG with Collusive Bid-Shading (VCG w/ Collusion}): We conduct a VCG auction where the non-colluders bid truthfully, but the colluders strategically bid shade as per Lemma \ref{Lemma:colluderVCG}.
    \item \textit{Posted Price with Collusive Overbidding (PP w/ Collusion}): We conduct a posted price auction where price is optimized assuming all bidders are truthful, and where items  are randomly allocated to qualifiers. This random allocation allows colluders to overbid and manipulate the allocation.
\end{enumerate}

We develop a firm size model to endogenize the sets of colluders and noncolluders, respectively. We assume that each bidder can be considered a firm with an attribute we call size $S_i$ that is independent of their valuation. Larger firms have a higher capacity to collude, and therefore their probability of colluding increases with their size. Each bidder's size is randomly sampled from some distribution $\mathcal{S}$. The bidder's size is then mapped via a sigmoid function $\sigma(S_i)$ which corresponds to the probability that the bidder is a colluder. A $\mathrm{Uniform}[0,1]$ random variable is simulated, and if it less than $\sigma(S_i)$ then bidder $i$ is a colluder, otherwise they are a noncolluder. We define $\ell(S_i) \leq \sigma(S_i)$, which lower bounds the probability a bidder is a colluder as a function of its size. We sample from the size distribution $\mathcal{S}$ and the valuation distribution a total of $T$ times to construct our sets of colluding and noncolluding bidders, where $T$ is the total number of bidders. We specify $\sigma$ and $\ell$ in the appendix.

\begin{figure}[t] 
    \centering

    \begin{subfigure}{0.48\textwidth}
        \centering
        \includegraphics[width=\linewidth]{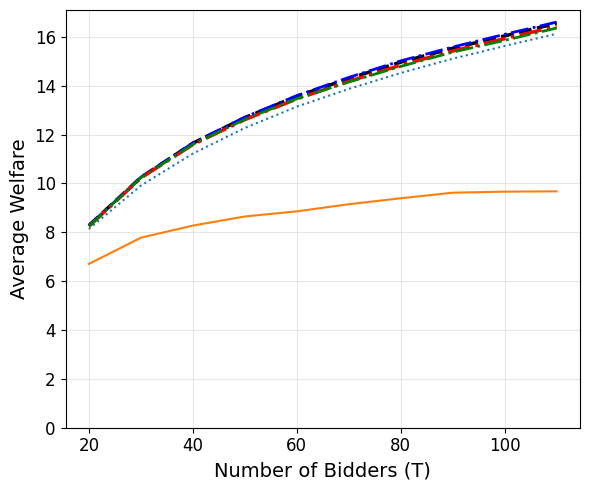}
        \caption{Welfare (Exponential)}
        \label{fig:welfare_100}
    \end{subfigure}
    \hfill
    \begin{subfigure}{0.48\textwidth}
        \centering
        \includegraphics[width=\linewidth]{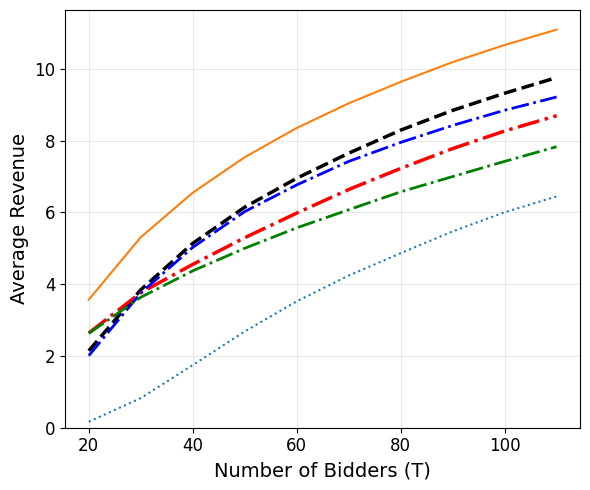}
        \caption{Revenue (Exponential)}
        \label{fig:revenue_100}
    \end{subfigure}

    \begin{subfigure}{0.48\textwidth}
        \centering
        \includegraphics[width=\linewidth]{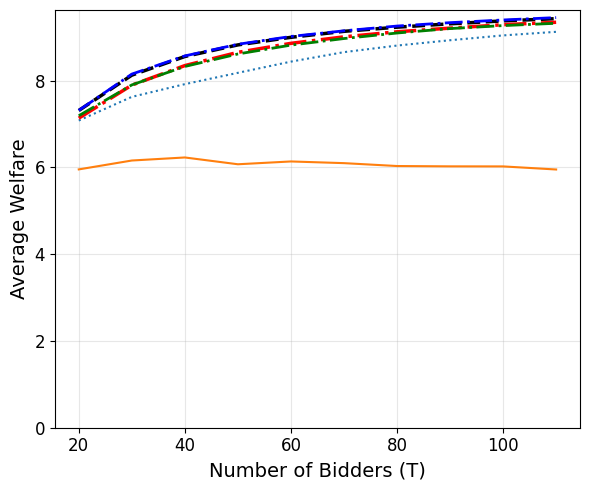}
        \caption{Welfare (Uniform)}
        \label{fig:welfare_100}
    \end{subfigure}
    \hfill
    \begin{subfigure}{0.48\textwidth}
        \centering
        \includegraphics[width=\linewidth]{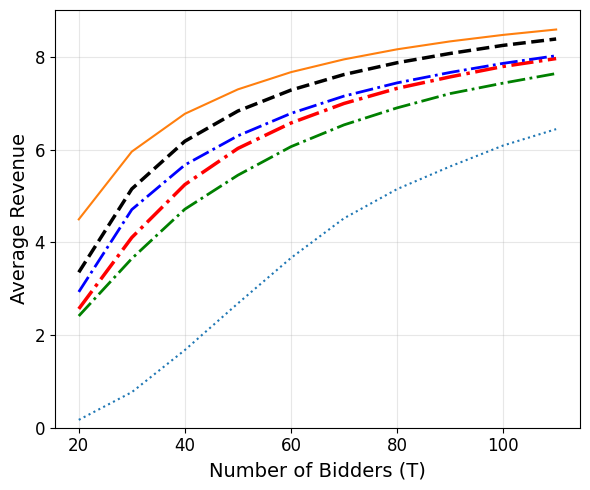}
        \caption{Revenue (Uniform)}
        \label{fig:revenue_100}
    \end{subfigure}

    \caption{Numerical results for Uniform(0,1) and Exponential ($\lambda = 2$) valuation distributions where $r = 10$. The solid orange curve is Posted Price with collusion, the dotted light blue curve is VCG with collusion, the long dash-dotted red curve is C-PoP with the firm specific classifier, the medium dash-dotted green curve is C-PoP with the 50/50 classifier, the short dash-dotted blue curve is C-PoP with the random classifier, and the dashed black curve is C-PoP with perfect classification.}
    \label{fig:exp-unknown-2}
\end{figure}

We evaluate the misclassification robustness of C-PoP with the DP oracle by using classifiers of varying sophistication. Ranked in order of best-to-worst collusion detection, we consider:
\begin{enumerate}[leftmargin=*]
    \item \textit{C-PoP with Perfect Classification (C-PoP -- Perfect)}: By Theorem \ref{Theorem:hvcgWelfare}, all bidders bid truthfully when there is no misclassification of the bidders.
    \item  \textit{C-PoP with Firm Specific Classification (C-PoP -- Firm Specific)}: Bidder $i$ is classified as a colluder with probability $\ell(S_i)$. The condition $\ell(S_i) \leq \sigma(S_i)$ ensures bidders are conservatively (i.e., reducing false positives) labeled as colluders. 
    \item \textit{C-PoP with Random Classification (C-PoP -- Random)}: Bidder $i$ is classified as a colluder with probability equal to the percent of colluders in the auction.
    \item \textit{C-PoP with 50/50 Classification (C-PoP -- 50/50)}: Bidder $i$ is independently classified as a colluder with a 50\% chance.
\end{enumerate}
Classification levels 2--4 represent scenarios where the misclassification of colluders as non-colluders (and of non-colluders as colluders) means that the colluders may find it advantageous to manipulate their bids. We assume that colluders manipulate their bids in the manner described in Algorithm \ref{alg:cpop-manipulation}:

We test C-PoP against the baseline mechanisms with $r = 10$ and $15$ items and a total number of bidders from $T = 20,30,...,110$. We experiment with a firm-size distribution $\mathcal{S} \sim Beta(5,3)$. All simulations are over 1,000 repetitions, with tables of results and standard errors in the Appendix. Figure \ref{fig:exp-unknown-2} visualizes the results, and it shows that C-PoP achieves high welfare and revenue relative to the benchmark auction mechanisms. Additionally, C-PoP is robust to collusion detection, as classifiers of different levels of sophistication do not substantially affect average welfare and revenue.


\section{Conclusion}
Common auction mechanisms are vulnerable to bidder collusion, which can reduce the auctioneer's revenue or the welfare of non-colluders. To design truthful mechanisms resilient to collusion, this paper considers a novel approach that uses a machine learning classifier to predict which bidders are colluding. Using these predicted labels, the Conditional Posted Price (C-PoP) mechanism uses a two-phase oracle designed via dynamic programming to set a price, then conducts a posted price auction with the identified non-colluders, and concludes by conducting a posted price auction with the identified colluders using any remaining items. By carefully setting price by sequentially revealing increasingly-larger non-colluder bids, C-PoP achieves truthfulness (even in the presence of bidder collusion) and generates  guarantees on expected welfare and auction price. Numerical experiments show that C-PoP can substantially outperform VCG under  collusion and Posted Price under collusion, even when the classifier used by C-PoP to detect collusion is imperfect. An important direction for future work is to extend these results to auctions with heterogeneous items and online settings.

\bibliographystyle{plainnat}
\bibliography{references}
\newpage
\appendix
\section{Appendix: VCG with Collusion}

\subsection{Lemma on Bidding in VCG with Collusion}

\begin{lemma}
\label{Lemma:colluderVCG}
    Consider the VCG mechanism. Let $r_{\textsc{c}}$ be the number of items colluders win, and let $r_{\textsc{n}} = r-r_{\textsc{c}}$ be the number of items non-colluders win. It is optimal for non-colluders to bid truthfully $\mathbf{b}^{\textsc{n}} = \mathbf{v}^{\textsc{n}}$. It is optimal for colluders to bid $b^{\textsc{c}}_{i} = \max\{v^\textsc{c}_i, b^{\textsc{n}}_{r_{\textsc{n}}+1} + \epsilon\}$ for $i \in \{1, \cdots, r_\textsc{c} \}$ and $b^\textsc{c}_{i} = 0$ for $i \in \{r_\textsc{c}+1, \cdots, C \}$, for any small $\epsilon > 0$, where $b^{\textsc{n}}_{r_{\textsc{n}} +1}$ is the largest losing bid among non-colluding bidders. With this optimal bidding strategy, the colluders achieve a utility of
            \begin{equation}
            \textstyle u_\textsc{c}(\mathbf{v}^{\textsc{n}}; r_\textsc{c}) := \max_{\mathbf{b}^\textsc{c}} \left\{u_\textsc{c}\left(\mathbf{b}^\textsc{c}, \mathbf{v}^{\textsc{n}}\right) \mid \sum_{i \in \textsc{c}} x_i(\mathbf{b}^\textsc{c}, \mathbf{v}^{\textsc{n}}) = r_\textsc{c} \right\} = \sum_{i = 1}^{r_\textsc{c}} v^\textsc{c}_{i} - r_\textsc{c} \cdot b^{\textsc{n}}_{r_{\textsc{n}}+1}.
        \end{equation}
Let $r_{\textsc{c}}^{*}$ and $r_{\textsc{n}}^{*}$ be the VCG allocation to colluding and non-colluding bidders, respectively, in the absence of collusion. Even when manipulating their bids, the colluding bidders do not desire more than $r_\textsc{c}^{*}$ items,that meaning that $u_\textsc{c}(\mathbf{v}^{\textsc{n}}; r_\textsc{c}^*) \geq u_\textsc{c}(\mathbf{v}^{\textsc{n}}; r_\textsc{c}^* + \Delta r)$ for all $\Delta r>0$.
\end{lemma}

\begin{proof}
Since the VCG mechanism is DSIC and since non-colluding bidders maximize only their own utility, their bids $\mathbf{b}^{\textsc{n}}$ must be their true valuations $\mathbf{v}^{\textsc{n}}$. The colluders maximize $u_\textsc{c}(b^\textsc{c},v^{\textsc{n}}) = \sum_{i \in \textsc{c}} v_i x_i - p_i$. For $\sum_{i \in \textsc{c}} x_i = r_\textsc{c}$, the $r_\textsc{c}$ largest colluder bids win. The VCG payment $p_i$ is $\max\{b^\textsc{c}_{r_\textsc{c}+1}, b^{\textsc{n}}_{r_{\textsc{n}}+1}\}$. The choice $b^\textsc{c}_{i} = 0$ for $i \in \{r_\textsc{c}+1, \dots, C \}$ minimizes the payment since $b^\textsc{c}_{r_\textsc{c}+1}=0$. The winning bids $b^\textsc{c}_{i} = \max\{v^\textsc{c}_i, b^{\textsc{n}}_{r_{\textsc{n}}+1} + \epsilon\}$ for $i \in \{1, \dots, r_\textsc{c} \}$ ensure allocation and $b^\textsc{c}_{i} > b^{\textsc{n}}_{r_{\textsc{n}}+1}$ for small $\epsilon >0$. Since payment $p_i = b^{\textsc{n}}_{r_{\textsc{n}}+1}$, the maximum utility for fixed $r_\textsc{c}$ is $ u_\textsc{c}(b^\textsc{c},v^{\textsc{n}}) = \sum_{i = 1}^{r_\textsc{c}} v^\textsc{c}_{i} - r_\textsc{c} \cdot b^{\textsc{n}}_{r_{\textsc{n}}+1}$. The welfare-maximizing allocation is $(r_\textsc{c}^*, r_\textsc{n}^*)$. For any $\Delta r > 0$, we have that
\begin{equation}
    \begin{aligned}
        \textstyle u_\textsc{c}(v^{\textsc{n}}; r_\textsc{c}^* + \Delta r) - u_\textsc{c}(v^{\textsc{n}}; r_\textsc{c}^*) &= \textstyle \sum_{i = r_\textsc{c}^*+1}^{r_\textsc{c}^*+\Delta r} v^\textsc{c}_{i} - \left(r_\textsc{c}^* +\Delta r\right)b^{\textsc{n}}_{r_\textsc{n}^*-\Delta r+1}+ r_\textsc{c}^* b^{\textsc{n}}_{r_\textsc{n}^*+1} \\
        & \textstyle= \sum_{i=1}^{\Delta r} \bigl(v^\textsc{c}_{r_\textsc{c}^*+i} - b^{\textsc{n}}_{r_\textsc{n}^*-\Delta r+1}\bigr) + r_\textsc{c}^*\bigl( b^{\textsc{n}}_{r_\textsc{n}^*+1} - b^{\textsc{n}}_{r_\textsc{n}^*-\Delta r+1}\bigr) \\
        & \textstyle \leq 0.
    \end{aligned}
    \end{equation}
    This holds because: (1) $v^\textsc{c}_{r_\textsc{c}^*+i} \leq b^{\textsc{n}}_{r_\textsc{n}^*}$ for $i \geq 1$, as these are items the colluders did not win in the truthful VCG allocation, and $b^{\textsc{n}}_{r_\textsc{n}^*-\Delta r+1} \geq b^{\textsc{n}}_{r_\textsc{n}^*}$; thus $v^\textsc{c}_{r_\textsc{c}^*+i} \leq b^{\textsc{n}}_{r_\textsc{n}^*-\Delta r+1}$. (2) $b^{\textsc{n}}$ is sorted in descending order, and $b^{\textsc{n}}_{r_\textsc{n}^*+1} \leq b^{\textsc{n}}_{r_\textsc{n}^*-\Delta r+1}$ since $\Delta r>0$. Both parts of the sum are non-positive.
\end{proof}

\subsection{Proof of Proposition \ref{thm:VCGeq}}

By Lemma \ref{Lemma:colluderVCG}, $r_\textsc{c}^{\scriptscriptstyle Col} \leq r_\textsc{c}^*$. Hence, $r_\textsc{n}^{\scriptscriptstyle Col} \geq r_\textsc{n}^*$ and $v^\textsc{c}_i > b^{\textsc{n}}_{r_\textsc{n}^{\scriptscriptstyle Col}+1}$ (the payment price for winning items). The optimal winning bids are $b^{\textsc{c}, \scriptscriptstyle Col}_{i} = \max\{v^\textsc{c}_i, b^{\textsc{n}}_{r_\textsc{n}^{\scriptscriptstyle Col}+1} + \epsilon\} = v^\textsc{c}_i$ for $i \in \{1, \dots, r_\textsc{c}^{\scriptscriptstyle Col} \}$ and losing bids $b^{\textsc{c}, \scriptscriptstyle Col}_{i} = 0$ for $i \in \{r_\textsc{c}^{\scriptscriptstyle Col}+1, \dots, C \}$. That is, colluding bidders are incentivized to bid as low as possible to win, and always bid either $v^\textsc{c}_i$ or $0$ ensuring $b^{\textsc{c},\scriptscriptstyle Col}_i \leq v^\textsc{c}_i$ (bid shading).

Since $r_\textsc{n}^ {\scriptscriptstyle Col} \geq r_\textsc{n}^*$, the non-colluders' allocation is non-decreasing: $x_i(b^{\scriptscriptstyle Col}) \geq x_i(b^*)$ for $i\in \textsc{n}$. The collusive payment is $p(b^{\scriptscriptstyle Col}) = b^{\textsc{n}}_{r_{\textsc{n}}^{\scriptscriptstyle Col}+1}$, and truthful payment is $p(b^{*}) = \max\{v^\textsc{c}_{r_\textsc{c}^*+1}, b^{\textsc{n}}_{r_{\textsc{n}}^*+1}\}$. Since $r_{\textsc{n}}^{\scriptscriptstyle Col} \geq r_{\textsc{n}}^*$, $b^{\textsc{n}}_{r_{\textsc{n}}^{\scriptscriptstyle Col}+1} \leq b^{\textsc{n}}_{r_{\textsc{n}}^*+1}$, so $p(b^{\scriptscriptstyle Col}) \leq p(b^*)$. We thus have $u_i(b^{\scriptscriptstyle Col}) = x_i(b^{\scriptscriptstyle Col}) v_i - p(b^{\scriptscriptstyle Col}) \geq x_i(b^*) v_i - p(b^*) = u_i(b^{*})$ for the non-colluder utility, $u_\textsc{c}(b^{\scriptscriptstyle Col}) = u_\textsc{c}(v^{\textsc{n}}; r^{\scriptscriptstyle Col}_\textsc{c}) \geq u_\textsc{c}(v^{\textsc{n}}; r^{*}_\textsc{c}) = u_\textsc{c}(b^{*})$ by definition of $r^{\scriptscriptstyle Col}_\textsc{c}$ for the colluder utility, and $u_a(b^{\scriptscriptstyle Col}) = r \cdot p(b^{\scriptscriptstyle Col}) \leq r \cdot p(b^*) = u_a(b^*)$ for the auctioneer's revenue.

VCG without collusion maximizes social welfare, $\mathtt{Wel}(b^*)$. Any deviation $b^{\scriptscriptstyle Col}$ that results in a different allocation must yield $\mathtt{Wel}(b^{\scriptscriptstyle Col}) \leq \mathtt{Wel}(b^*)$. This, combined with $\mathtt{Rev}(b^{\scriptscriptstyle Col}) \leq \mathtt{Rev}(b^*)$ (from the auctioneer's utility), confirms that both welfare and revenue are non-increasing.

\subsection{Proof of Proposition \ref{thm:BK}}
    First, we prove $\mathtt{Wel}_{\textsc{VCG}}(\mathsf{N},\mathsf{C}) \geq \mathtt{Wel}_{\textsc{VCG}}(\mathsf{N},\emptyset)$ and $\mathtt{Rev}_{\textsc{VCG}}(\mathsf{N},\mathsf{C}) \geq \mathtt{Rev}_{\textsc{VCG}}(\mathsf{N},\emptyset)$. From Lemma \ref{thm:VCGeq}, when all $\mathsf{N} \cup \mathsf{C}$ bidders participate in a VCG mechanism, the non-colluding bidders bid $v^\textsc{n}$ (i.e., truthfully) and the colluding bidders bid $b^{\textsc{c},\scriptscriptstyle Col} \leq v^\textsc{c}$ (i.e., either bid shade to 0 or report their true valuation). The welfare achieved is $\mathtt{Wel}(v^\textsc{n}, b^{\textsc{c},\scriptscriptstyle Col})$. Since the VCG mechanism allocates items to the top bidders, when a colluding bidder $i$ bid shades to $0$, an item could be allocated to a bidder $j$ who values it less than $i$ leading to a welfare loss. In the worst case, all the colluders bid shade to $0$ achieving a welfare $\mathtt{Wel}(v^\textsc{n},0)$. Hence,  $\mathtt{Wel}(v^\textsc{n}, b^{\textsc{c},\scriptscriptstyle Col}) \geq \mathtt{Wel}(v^\textsc{n},0)$. Now, consider a VCG auction with only the non-colluding bidders $\mathsf{N}$ achieving a welfare $\mathtt{Wel}(v^\textsc{n})$. Since the bidders have non-negative valuations, adding $|\mathsf{C}|$ more bids of $0$ cannot hurt the welfare, that is $\mathtt{Wel}(v^\textsc{n}) \leq \mathtt{Wel}(v^\textsc{n},0)$. The result follows from $\mathtt{Wel}(v^\textsc{n}, b^{\textsc{c},\scriptscriptstyle Col}) \geq \mathtt{Wel}(v^\textsc{n},0) \geq \mathtt{Wel}(v^\textsc{n})$. The proof for revenue is similar. When $|\mathsf{N}| > r$, the price drops to $v^\textsc{n}_{r+1}$ in the worst case when all colluding bidders bid shade to $0$, and the revenue $\mathtt{Rev}_{\textsc{VCG}}(\mathsf{N},\mathsf{C}) \geq v^\textsc{n}_{r+1} r= \mathtt{Rev}_{\textsc{VCG}}(\mathsf{N},\emptyset)$. When $|\mathsf{N}| \leq r$, $\mathtt{Rev}_{\textsc{VCG}}(\mathsf{N},\emptyset) = 0$ and $\mathtt{Rev}_{\textsc{VCG}}(\mathsf{N}, \mathsf{C}) \geq \mathtt{Rev}_{\textsc{VCG}}(\mathsf{N},\emptyset)$ holds trivially.


From the claim proved above, $\mathtt{Wel}_{\textsc{VCG}}(\mathsf{N} \cup\mathsf{N}',\mathsf{C}) \geq \mathtt{Wel}_{\textsc{VCG}}(\mathsf{N}\cup\mathsf{N}',\emptyset)$. Now suppose $\mathsf{N}'_{collude}$ are the bidders in $\mathsf{N}$ who have decided to collude. Since VCG by construction achieves the highest possible welfare when no bidders collude, this means we must have $\mathtt{Wel}_{\textsc{VCG}}(\mathsf{N}\cup\mathsf{N}',\emptyset) \geq \mathtt{Wel}_{\textsc{OPT}}(\mathsf{N}\cup\mathsf{N}'_{collude})$. Since the bidders are assumed to have true valuations from an i.i.d. distribution, $\mathbb{E}[\mathtt{Wel}_{\textsc{OPT}}(\mathsf{N}\cup\mathsf{N}'_{collude})] = \mathbb{E}[\mathtt{Wel}_{\textsc{OPT}}(\mathsf{N}\cup\mathsf{C})]$. Putting this all together, we get \(\mathbb{E}[\mathtt{Wel}_{\textsc{VCG}}(\mathsf{N} \cup \mathsf{N}', \mathsf{C})] \geq \mathbb{E}[\mathtt{Wel}_{\textsc{OPT}}(\mathsf{N}\cup\mathsf{C})]\), which is the desired result.









\section{Appendix: C-PoP Mechanism}
\subsection{Alternative Approaches to the  DP Oracle in Algorithm \ref{alg:kporacle}}\label{app:oracle}
We present three other approaches for designing the oracle in Algorithm \ref{alg:kporacle}: unconditional and conditional maximization-based approaches, and a greedy approach. In general, approaches that incorporate more observed non-colluding bids $b^{\textsc{n}}$ when computing the optimal $(k^*,p^*)$ perform better, albeit at increased computational cost. In particular, dynamic programming performs best, followed by greedy, conditional, and unconditional maximization, in that order. 
\begin{algorithm}
    \caption{Oracle($b^{\textsc{n}}, n, c$)}
    \begin{algorithmic}[1]
        \State \textbf{Input:} Total number of items $r$; number of non-colluding bidders and colluding bidders $n$, $c$ respectively;  
        non-colluding bids $b^{\textsc{n}}$; function $M(p, \overline{\mathsf{B}^\textsc{n}_k})$. 
        \State \textbf{Parameter:} 
        Approach $\in$ \{Unconditional Maximization, Conditional Maximization, Greedy\}
        \State \textbf{Output:} \(k^*, p^*\)
        \vspace{5pt}
        \For{$k \in \{0,\cdots, r\}$}
                \State $p_k = \arg \max_{p \ge b^{\textsc{n}}_{k+1}} M(p,\overline{\mathsf{B}^\textsc{n}_k})$
        \EndFor
        \vspace{5pt}
        \If {Approach $==$ Unconditional Maximization} 
        \Comment{\textbf{Approach 1}}
        \State $k^* \leftarrow \arg\max_{k \in \{0, \cdots,r\}} \mathbb{E}\left[M(p_k, \overline{\mathcal{B}^\textsc{n}_k})\right]$
        \State $p^* \leftarrow p_{k^*}$
        \EndIf
        \vspace{5pt}
        \If {Approach $==$ Conditional Maximization} \Comment{\textbf{Approach 2}}
        \State $k^* \leftarrow \arg\max_{k \in \{0, \cdots,r\}} \mathbb{E}\left[M(p_k, \overline{\mathcal{B}^\textsc{n}_k}) \mid \overline{\mathsf{B}^\textsc{n}_r}\right]$
        \State $p^* \leftarrow p_{k^*}$
        \EndIf
        \vspace{5pt}
        \If {Approach $==$ Greedy} \Comment{\textbf{Approach 3}}
        \State Set $k^* \leftarrow r$\;
        \While{$M(p_{k^*}, \overline{\mathsf{B}^\textsc{n}_{k^*}}) < \mathbb{E}\left[M(p_{k^*-1}, \overline{\mathcal{B}^\textsc{n}_{k^*-1}}) \mid \overline{\mathsf{B}^\textsc{n}_{k^*}}\right]$}
            \State $k^* \leftarrow k^*-1$
        \EndWhile
        \State $p^* \leftarrow p_{k^*}$
        \EndIf
    \end{algorithmic}
\end{algorithm}
\subsection{Proof of Theorem \ref{Lemma:dsic}}
\paragraph{Non-colluders.}Consider bidder $i \in \mathsf{N}$ and fix the bids from other bidders $b_{i-}$. First, we characterise how changes in the bidder $i$'s reported bid affects the outcome of the DP oracle~\ref{alg:kporacle} in the next lemma.

\begin{lemma}\label{lemma:k-charac}
    Fix $i \in \mathsf{N}$, $b^{\mathsf{N}}_{i-}$. Consider two possible bid values \(b_i\) and \(b'_i\) s.t. \(b'_i > b_i\). Define \(b^\mathsf{N} = (b_i, b^{\mathsf{N}}_{i-})\), \(b^{'\mathsf{N}} = (b'_i, b^{\mathsf{N}}_{i-})\),  \(i_{b_i} = \sum_{j \in \mathsf{N}} \mathds{1}(b_j \ge b_i)\) and \(i_{b'_i} = \sum_{j \in \mathsf{N}} \mathds{1}(b_j \ge b'_i)\). Then the following holds: \\
    (1) If \(k^*(b^{\mathsf{N}}, c) \ge i_{b_i} \), then \(k^*(b^{'\mathsf{N}}, c) = k^*(b^{\mathsf{N}}, c)\) and \(p^*(b^{'\mathsf{N}}, c) = p^*(b^{\mathsf{N}}, c)\). \\
    (2) If  \(k^*(b^{\mathsf{N}}, c) < i_{b_i}\), then \(k^*(b^{'\mathsf{N}}, c) <  \min\{i_{b_i}, r+1\}\) and \(p^*(b^{'\mathsf{N}}, c) \ge  \max\{b_i, b^{\mathsf{N}}_{r+1}\}\).
\end{lemma}
\begin{proof}
     In the Phase 2 of the DP procedure, the while loop condition \(M(p_k, \overline{\mathsf{B}^\textsc{n}_k})< 
        \mathbb{E}\left[V( p_{k-1}, \overline{\mathcal{B}^\textsc{n}_{k-1}}) \mid \overline{\mathsf{B}^\textsc{n}_k}\right]\) is met at \(k^*(b^{\mathsf{N}}, c) \) by definition. We consider the following two cases:\\
\noindent (1) \(k^*(b^{\mathsf{N}}, c) \ge i_{b_i}\): The top bid values \(\{b^{\mathsf{N}}_1, \cdots,  b^{\mathsf{N}}_{k^*(b^{\mathsf{N}}, c)}\}\) are \textit{not} used in the phase 2 of DP procedure by construction. If \(k^*(b^{\mathsf{N}}, c) \ge i_{b_i}\), then both \(b'_i, b_i\) remain in the top \(k^*(b^{\mathsf{N}}, c)\) bids. Hence, the increase in bid value from $b_i$ to $b'_i$ does \textit{not} affect the DP outcome and \(k^*(b^{'\mathsf{N}}, c) = k^*(b^{\mathsf{N}}, c)\) and \(p^*(b^{'\mathsf{N}}, c) = p^*(b^{\mathsf{N}}, c)\).\\
\noindent (2) \(k^*(b^{\mathsf{N}}, c) < i_{b_i}\): By definition, \(i_{b'_i} < i_{b_i}\) since \(b'_i > b_i\). The while loop in DP's phase 2 evaluates the same conditions \(M(p_k, \overline{\mathsf{B}^\textsc{n}_k})< \mathbb{E}\left[V( p_{k-1}, \overline{\mathcal{B}^\textsc{n}_{k-1}}) \mid \overline{\mathsf{B}^\textsc{n}_k}\right]\) for all \(k > \max\{i_{b_i}, i_{b'_i}\} = i_{b_i}\) irrespective of whether bids \(b^{\mathsf{N}}\) or \(b^{'\mathsf{N}}\) are used. Since \(k^*(b^{\mathsf{N}}, c) < i_{b_i}\), we also know these while loop conditions are all met for \(k > i_{b_i}\), and for both bids \(b^{\mathsf{N}}\) and \(b^{'\mathsf{N}}\). Hence, \(k^*(b^{'\mathsf{N}}, c) <  \min\{i_{b_i}, r+1\}\) and \(p^*(b^{'\mathsf{N}}, c) \ge  \max\{b^{\mathsf{N}}_{i_{b_i}}, b^{\mathsf{N}}_{r+1}\}\) for bids \(b^{'\mathsf{N}}_{i}\). Also note that \(k^*(b^{\mathsf{N}}, c) <  \min\{i_{b_i}, r+1\}\) and \(p^*(b^{\mathsf{N}}, c) \ge  \max\{b^{\mathsf{N}}_{i_{b_i}}, b^{\mathsf{N}}_{r+1}\}\) is always true when  \(k^*(b^{\mathsf{N}}, c) < i_{b_i}\).
\end{proof}

For bid \(b_i\), the corresponding utility of bidder \(i\) is $u_i(b_i,b_{i-}) = v_i x_i(b_i, b_{i-}) - p_i(b_i,b_{i-}) $. Let \(v_i\) be the true valuation of bidder $i$. We show that $u_i(v_i,b_{i-}) \ge u_i(b_i,b_{i-}) $ for all possible bid values $b_i$, and hence that the mechanism is DSIC for bidder $i$ by definition. The C-PoP mechanism uses the DP oracle~\ref{alg:kporacle} to calculate $k^*(b^{\mathsf{N}}, c)$ and $p^*(b^{\mathsf{N}}, c)$. The allocation rule for non-colluding bidder \(x_i(b_i, b_{i-}) = \mathds{1}(b_i > p^*(b^{\mathsf{N}}, c))\) and \(p_i(b_i,b_{i-}) = p^*(b^{\mathsf{N}}, c) \mathds{1}(b_i > p^*(b^{\mathsf{N}}, c)) \). The DP oracle always sets \(p^*(b^{\mathsf{N}}, c) \ge b^{\mathsf{N}}_{k^*(b^{\mathsf{N}}, c)+1}\), where \(k^*(b^{\mathsf{N}}, c)\) is set through the phase 2 of its algorithm.

Define \(v^\mathsf{N} = (v_i, b^{\mathsf{N}}_{i-})\), \(b^{\mathsf{N}} = (b_i, b^{\mathsf{N}}_{i-})\), \(v^{\mathsf{M}} = (v^{\mathsf{N}}, b^{\mathsf{C}})\), \(b^{\mathsf{M}} = (b^{\mathsf{N}}, b^{\mathsf{C}})\), \(i_{v_i} = \sum_{j \in \mathsf{N}} \mathds{1}(b_j \ge v_i)\) and \(i_{b_i} = \sum_{j \in \mathsf{N}} \mathds{1}(b_j \ge b_i)\). We consider the following four cases.
\paragraph{Case 1: \(b_i > v_i\) and \(k^*(v^{\mathsf{N}},c) \ge i_{v_i}\).} By Lemma \ref{lemma:k-charac} (1), \(k^*(b^{\mathsf{N}}, c) = k^*(v^{\mathsf{N}}, c)\) and \(p^*(b^{\mathsf{N}}, c) = p^*(v^{\mathsf{N}}, c)\). Since \(b_i > v_i\) and \(p^*(b^{\mathsf{N}}, c) = p^*(v^{\mathsf{N}}, c)\), it holds that \(x_i(b^{\mathsf{N}}) \ge x_i(v^{\mathsf{N}})\) and \(p_i(b^{\mathsf{N}}) \ge p_i(v^{\mathsf{N}})\). The utility with bids \(b_i\) and \(v_i\) are
\begin{align*}
    & u_i(b^{\mathsf{M}}) = \big(v_i -p^*(v^{\mathsf{N}}, c)\big) \cdot \mathds{1}\big(b_i > p^*(v^{\mathsf{N}}, c)\big) \\
    & u_i(v^{\mathsf{M}}) = \big(v_i -p^*(v^{\mathsf{N}}, c)\big) \cdot \mathds{1}\big(v_i > p^*(v^{\mathsf{N}}, c)\big)
\end{align*}
If \(b_i > v_i > p^*(v^{\mathsf{N}}, c)\), then \(u_i(b^{\mathsf{M}}) = u_i(v^{\mathsf{M}}) = v_i - p^*(v^{\mathsf{N}}, c)\). If \(p^*(v^{\mathsf{N}}, c) > b_i >v_i\), then \(u_i(b^{\mathsf{M}}) = u_i(v^{\mathsf{M}}) = 0\). If \(b_i> p^*(v^{\mathsf{N}}, c) >v_i\), then \(u_i(b^\mathsf{M}) = v_i - p^*(v^{\mathsf{N}}, c) < 0\) and \(u_i(v^{\mathsf{M}}) = 0\). Hence, \(u_i(v^{\mathsf{M}}) \ge u_i(b^{\mathsf{M}}) \) holds in this case.

\paragraph{Case 2: \(b_i > v_i\) and \(k^*(v^{\mathsf{N}},c) < i_{v_i}\).} By lemma~\eqref{lemma:k-charac} (2), \(k^*(b^{\mathsf{N}}, c) <  \min\{i_{v_i}, r+1\}\) and \(p^*(b^{\mathsf{N}}, c) \ge  \max\{v_i, v^{\mathsf{N}}_{r+1}\}\). Since, \(k^*(v^{\mathsf{N}},c) < i_{v_i}\), no item is allocated to bidder $i$ when they bid \(v_i\), i.e., \(x_i(v^{\mathsf{N}}) = 0\), \(p_i(v^{\mathsf{N}}) = 0\) and \(u_i(v^{\mathsf{M}}) = 0\). The utility with bid \(b_i\)
\begin{align*}
    u_i(b^{\mathsf{M}}) & = \big(v_i - p^*(b^{\mathsf{N}})\big) \cdot \big(\mathds{1}(b_i >  p^*(b^{\mathsf{N}})\big) \\
    & \le v_i - p^*(b^{\mathsf{N}}) \le v_i - \max\{v_i, v^{\mathsf{N}}_{r+1}\} \\
    & \le 0 = u_i(v^{\mathsf{M}}).
\end{align*}

\paragraph{Case 3: \(b_i < v_i\) and \(k^*(b^{\mathsf{N}},c) \ge i_{b_i}\).} From lemma~\eqref{lemma:k-charac} (1), it is true that \(k^*(b^{\mathsf{N}}, c) = k^*(v^{\mathsf{N}}, c)\) and \(p^*(b^{\mathsf{N}}, c) = p^*(v^{\mathsf{N}}, c)\). Then, comparing the utilities, 
\begin{align*}
    u_i(b^{\mathsf{M}}) & = \big(v_i -p^*(v^{\mathsf{N}}, c)\big) \cdot \mathds{1}\big(b_i > p^*(v^{\mathsf{N}}, c)\big) \\
    & \le \big(v_i -p^*(v^{\mathsf{N}}, c)\big) \cdot \mathds{1}\big(v_i > p^*(v^{\mathsf{N}}, c)\big) \quad (\because v_i > b_i) \\
    & = u_i(v^{\mathsf{M}}).
\end{align*}

\paragraph{Case 4: \(b_i < v_i\) and \(k^*(b^{\mathsf{N}},c) < i_{b_i}\). } By lemma~\eqref{lemma:k-charac} (2), \(k^*(b^{\mathsf{N}}, c) <  \min\{i_{b_i}, r+1\}\) and \(p^*(b^{\mathsf{N}}, c) \ge  \max\{b_i, b^{\mathsf{N}}_{r+1}\}\). Since, \(k^*(b^{\mathsf{N}},c) < i_{b_i}\), no item is allocated to bidder $i$ when they bid \(b_i\), i.e., \(x_i(b^{\mathsf{N}}) = 0\), \(p_i(b^{\mathsf{N}}) = 0\) and \(u_i(b^{\mathsf{M}}) = 0\). Hence, \(u_i(v^{\mathsf{M}}) \ge u_i(b^{\mathsf{M}})\) holds trivially.

\paragraph{Colluders.} Consider bidder $i \in \mathsf{C}$. The remaining $r-w_{\mathsf{N}}$ items, where \(w_{\mathsf{N}}\) is independent of the colluding bids, are allocated to colluders via the posted price $p^*(b^\textsc{n},c)$. In our model for colluding bidders, we consider them to be maximizing their total utility $\sum_{i \in \textsc{c}} u_i(b^{\mathsf{M}})$. Since they exchange monetary payments and items amongst themselves, it suffices to consider the top bidding colluders while calculating their net utility. Hence, $\sum_{i \in \textsc{c}} u_i(b^{\mathsf{M}}) = \sum_{i = 1}^{r-w_{\textsc{n}}} \left(v^\textsc{c}_i - p^*(b^\textsc{n},c)\right) 1\big(b^\textsc{c}_i >p^*(b^\textsc{n},c)\big)$. For any $p^*(b^\textsc{n},c)$, this net utility is maximized when $v^\textsc{c}_i > p^*(b^\textsc{n},c) \Rightarrow 1\big(b^\textsc{c}_i > p^*(b^\textsc{n},c)\big) = 1$. Hence, bidding truthfully, i.e., $v_i^\textsc{c} = b_i^\textsc{c} $ $\forall i \in C$ is a utility-maximizing strategy.

Hence, for every bidder, truthful reporting is a weakly dominant strategy, and the mechanism is (ex-post) DSIC.
\subsection{Proof of Proposition~\ref{prop:pricebound}}
\paragraph{Correctly classified non-colluders.} Consider \(i \in \mathsf{N}_{\textsc{t}}.\) Fix the bids from other bidders \(b_{i-}\). The lemma~\ref{lemma:k-charac} (using \(\widehat{\mathsf{N}}\) instead of \(\mathsf{N}\), and \(|\widehat{\mathsf{C}}|\) instead of \(|\mathsf{C}|\)) can be used to characterize how the bidder $i$'s reported bids affects the outcomes of the DP oracle. Subsequently, we can consider four cases with respect to the true valuation of bidder \(v_i\) as in the proof of theorem~\ref {Lemma:dsic} to prove truthfulness.

\paragraph{Incorrectly classified non-colluders.} Consider \(i \in \mathsf{N}_{\textsc{f}}.\) Their utility \(u_i(b^{\mathsf{M}}) = v_i - p^*(b^{\widehat{\mathsf{N}}},|\widehat{\mathsf{C}}|) \mathds{1}(b_i >p^*(b^{\widehat{\mathsf{N}}},|\widehat{\mathsf{C}}|))\), where \(b_i\),  \(v_i\) is their reported and true valuation respectively and the price \(p^*(b^{\widehat{\mathsf{N}}},|\widehat{\mathsf{C}}|)\) is independent of their reported bid value. Hence, their utility is maximized when \(v_i > p^*(b^{\widehat{\mathsf{N}}},|\widehat{\mathsf{C}}|) \Rightarrow \mathds{1}\big(b_i > p^*(b^{\widehat{\mathsf{N}}},|\widehat{\mathsf{C}}|)\big) = 1\) and truthful reporting \(v_i = b_i\) is a utility-maximizing strategy.

\paragraph{Price lower bound.} When \(|\mathsf{N}_{\textsc{f}}| \ge r+1\), the DP oracle is assured to set a price \(p^*(b^{\widehat{\mathsf{N}}},|\widehat{\mathsf{C}}|) \ge b^{\widehat{\mathsf{N}}}_{r+1} \ge b^{\mathsf{N}_{\textsc{t}}}_{r+1}\).
\subsection{Proof of Proposition~\ref{lemma:VCG-PoPeq}}
If the C-PoP mechanism was run only on a set of non-colluders \(\mathsf{N}\), then the mechanism would allocate all the $r$ items to bidders in  \(\mathsf{N}\). The DP oracle would set the price using \(p_r\) as follows.
\begin{align*}
    & p^*(b^{\mathsf{N}}, 0) = p_r = \arg \max_{p \ge b^{\mathsf{N}}_{r+1}} \mathbb{E}[\sum_{i=1}^{r} b^{\mathsf{N}}_{i} \mathds{1}(b^{\mathsf{N}}_{i} > b^{\mathsf{N}}_{r+1})] = b^{\mathsf{N}}_{r+1}
\end{align*}
Since the price is set at \(b^{\mathsf{N}}_{r+1}\), the C-PoP mechanism would allocate items to the top $r$ bids in \(\mathsf{N}\). Therefore, it also holds that \(
    x_i(b^{\mathsf{N}}, \emptyset) = \mathds{1}(b^{\mathsf{N}}_{i} > b^{\mathsf{N}}_{r+1}).\) Hence, the C-PoP mechanism and VCG are equivalent when run on a set of classified non-colluders and without any bidders classified as colluders.
\subsection{Proof of Proposition \ref{lemma:conEW}}
Since, the C-PoP mechanism induces truthful bidding from Theorem \ref{Lemma:dsic}, \(B^{\textsc{n}} = V^{\textsc{n}}\). First note that the expected welfare conditioned on all the non-colluding $\overline{V^{\textsc{n}}_{k}}$ is the same as that conditioned on highest non-colluding losing bid $V^{\textsc{n}}_{k +1}$. Also, since the colluding bidders can exchange both items and monetary payments among themselves, the specific allocation of items to individual bidders is irrelevant. For welfare computation, it suffices to consider the highest colluding bids among the winners. Using Assumption \ref{assumption:iid} and probability integral transform of continuous random variables, $V^\textsc{n}_k = \mathsf{Q}(U^\textsc{n}_k)$ (see Equation $(2.3.7)$ of \cite{David_Nagaraja_2003}), where $U^\textsc{n}_k$ is $k-$th largest order statistic of $N$ samples of standard uniform random variables.  Using $\mathsf{Q}(\cdot) = \mathsf{F}^{-1}(\cdot)$, we get $\mathsf{F}(V^\textsc{n}_{k}) = U^\textsc{n}_{k}$. We use these facts to simplify the expression \(\textstyle \mathbb{E}\left[\mathtt{Wel}_{\textsc{C-PoP}}(\mathsf{N}, \mathsf{C}; p) \mid p\ge V^{\textsc{n}}_{k +1}, \overline{V^{\textsc{n}}_{k}}\right] = \mathbb{E}\left[\sum_{i \in  \mathsf{M}} V_i x_i(V^{\mathsf{M}}) \mid p\ge V^{\textsc{n}}_{k +1}, V^{\textsc{n}}_{k +1}\right]\), where \(\mathsf{M} = \mathsf{N}, \mathsf{C}\).

Let $w_{\textsc{n}}$ be the number of non-colluding bids above the price $p$. Then, $w_{\textsc{n}} \sim \mathrm{Binom}\big(k,\frac{1-\mathsf{F}(p)}{1-\mathsf{F}(V^{\textsc{n}}_{k+1})}\big)$. Similarly, let $w_{\textsc{c}}$ be the number of colluding bids above the price $p$, and $w_{\textsc{c}} \sim \mathrm{Binom}\big(\mathsf{C},1-\mathsf{F}(p)\big)$.
\begin{equation*}
\begin{aligned}
& \textstyle\mathbb{E}\big[\sum_{i \in  \mathsf{M}} V_i x_i(V^{\mathsf{M}})\ |\ p\ge V^{\textsc{n}}_{k +1}, V^\textsc{n}_{k+1}\big] \\ 
&= \textstyle \mathbb{E}\big[\sum_{j=1}^{w_{\mathsf{N}}} V^{k}_j \mid V^{k}_{k} > p \big] + \mathbb{E}\big[\sum_{j=1}^{r-w_{\textsc{n}}-1}\mathbb{P}\big(w_{\textsc{c}}=j\big)\mathbb{E}\big[\sum_{i=1}^{j}  V_i^j \mid  V_j^j > p\big] \big] \\
& \textstyle \quad+ \mathbb{E}\big[\sum_{j=r-w_{\textsc{n}}}^{c}\mathbb{P}\big(w_{\textsc{c}}=j\big)\mathbb{E}\big[\sum_{i=1}^{r-w_{\textsc{n}}}  V_i^j \mid  V_j^j > p\big] \big]\\
&= \textstyle \mathbb{E}\big[w_{\textsc{n}}\big]\mathbb{E}\big[ V\ |\ V > p  \big] +  \mathbb{E}\big[\sum_{j=1}^{r-w_{\textsc{n}}-1}j \mathbb{P}\big(w_{\textsc{c}}=j\big)\mathbb{E}\big[ V \mid  V > p\big] \big] \\
& \textstyle \quad+ \mathbb{E}\big[\sum_{j=r-w_{\textsc{n}}}^{c}\mathbb{P}\big(w_{\textsc{c}}=j\big)\mathbb{E}\big[\sum_{i=1}^{r-w_{\textsc{n}}}  V_i^j \mid  V_j^j > p\big] \big]\\
&= \textstyle \mathbb{E}\big[w_{\textsc{n}}\big]\mathbb{E}\big[ \mathsf{Q}(U)\ |\ U > \mathsf{F}(p)  \big] +  \mathbb{E}\big[\sum_{j=1}^{r-w_{\textsc{n}}-1}j \mathbb{P}\big(w_{\textsc{c}}=j\big)\mathbb{E}\big[ \mathsf{Q}(U) \mid  U > \mathsf{F}(p)\big] \big] \\
& \textstyle \quad+ \mathbb{E}\big[\sum_{j=r-w_{\textsc{n}}}^{c}\mathbb{P}\big(w_{\textsc{c}}=j\big)\mathbb{E}\big[\sum_{i=1}^{r-w_{\textsc{n}}}  \mathsf{Q}(U_i^j) \mid  U_j^j > \mathsf{F}(p)\big] \big],
\end{aligned}\end{equation*}
where \(U \sim \mathrm{Uniform}[0,1]\).

\subsection{Proof of Theorem \ref{Theorem:hvcgWelfare}}
By Theorem~\ref{Lemma:dsic}, all bids are known to be truthful. When $k^*=r$ as in Proposition~\ref{lemma:VCG-PoPeq}, C-PoP mechanism allocates all items to the non-colluding bidders and is equivalent to the VCG mechanism. Hence, $\mathtt{Wel}_{\textsc{C-PoP}}(\mathsf{N}, \mathsf{C}; p= p_r = b^{\mathsf{N}}_{r+1}) = \mathtt{Wel}_{\textsc{VCG}}(\mathsf{N},\emptyset)$. If C-PoP uses a $k^* < r$ then, define:
\begin{align*}
    p^* & = \arg \max_{p \ge b^{\mathsf{N}}_{k^*+1}} \mathbb{E}[\mathtt{Wel}_{\textsc{C-PoP}}(\mathsf{N}, \mathsf{C}; p) \mid p\ge b^{\mathsf{N}}_{k^*+1}, b^{\mathsf{N}}_{k^*+1}] \\
    \hat{p}^* &= \arg \max_{p \ge b^{\mathsf{N}}_{r+1}} \mathbb{E}[\mathtt{Wel}_{\textsc{C-PoP}}(\mathsf{N}, \emptyset; p) \mid p \ge b^{\mathsf{N}}_{r+1}, b^{\mathsf{N}}_{r+1}] 
\end{align*} 
We use the above definitions to derive the welfare bounds.
\begin{align*}
\mathbb{E}[\mathbb{E}[\mathtt{Wel}_{\textsc{C-PoP}}(\mathsf{N}, \mathsf{C}; p^*) \mid b^{\mathsf{N}}_{k^*+1}]] & \ge  \mathbb{E}[\mathbb{E}[\mathtt{Wel}_{\textsc{C-PoP}}(\mathsf{N}, \mathsf{C}; \hat{p}^*) \mid  b^{\mathsf{N}}_{k^*+1}]] \\
    & \ge \mathbb{E}[\mathbb{E}[\mathtt{Wel}_{\textsc{C-PoP}}(\mathsf{N}, \emptyset; \hat{p}^*) \mid b^{\mathsf{N}}_{k^*+1}]] \\
    & = \mathbb{E}[\mathtt{Wel}_{\textsc{C-PoP}}(\mathsf{N}, \emptyset; \hat{p}^*) \mid b^{\mathsf{N}}_{r^*+1}]  \\
    & = \mathbb{E}[\mathtt{Wel}_{\textsc{VCG}}(\mathsf{N}, \emptyset)].
\end{align*}

\subsection{Proof of Corollary \ref{corro:limits}}

Observe that $\mathbb{E}\left[\mathtt{Wel}_{\textsc{VCG}}(\mathsf{N})\right]  \leq \mathbb{E}\left[\mathtt{Wel}_{\textsc{C-PoP}}(\mathsf{N},\mathsf{C}; p^*)\right] \leq  \mathbb{E}\left[\mathtt{Wel}_{\textsc{VCG}}(\mathsf{N}\cup \mathsf{C},\emptyset)\right]$, where the first inequality holds by Theorem \ref{Theorem:hvcgWelfare} and the second inequality holds by the efficiency of VCG for noncollusive bidders. When number of items $r$ is fixed and $\mathsf{F}(\cdot)$ has a bounded support, then $\textstyle \lim_{n \to \infty} \mathbb{E}[\mathtt{Wel}_{\textsc{VCG}}(\mathsf{N},\mathsf{C})] - \mathbb{E}[\mathtt{Wel}_{\textsc{VCG}}(\mathsf{N})] = 0$ holds. The result thus follows.

\subsection{Construction of Minorants for (Conditional) Expected Welfare}\label{app:minorant}
The next lemma links order statistics of bidders' valuations to those of samples from a uniform distribution, establishing a simple way to bound valuations using the quantile function’s upper and lower slopes.
\begin{lemma}\label{Lemma:LUbound}
    Given that Assumptions \ref{assumption:iid} and \ref{Assump:Qfunc} hold, an order statistic $V^\textsc{s}_k$ satisfies $L U^\textsc{s}_k \leq V^\textsc{s}_k$, where $U^\textsc{s}_k$ is the k-th largest order statistic among $S$ i.i.d. standard uniform samples. Furthermore, $\mathsf{F}(V^\textsc{n}_{k}) = U^\textsc{n}_{k}$.
\end{lemma}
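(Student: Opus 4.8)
The plan is to build everything on a single coupling between the valuations and the uniform samples via the probability integral transform. First I would fix the set $S$ and, using Assumption~\ref{assumption:iid}, write each valuation as $v_i = \mathcal{Q}(u_i)$, where $u_i = \mathcal{F}(v_i)$ are i.i.d.\ $\mathrm{Uniform}[0,1]$ (this is the standard probability integral transform, valid since continuous differentiability of $\mathcal{F}$ makes it continuous). Invertibility of $\mathcal{F}$ — guaranteed by the sufficient conditions of monotonicity and continuous differentiability stated in Assumption~\ref{assumption:iid} — is what makes $\mathcal{Q}=\mathcal{F}^{-1}$ well defined and turns this substitution into an exact pointwise identity rather than merely an equality in distribution.

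Next I would use that $\mathcal{Q}$ is nondecreasing, being the inverse of a nondecreasing $\mathcal{F}$, so it preserves ranking: if $u^S_k$ denotes the $k$-th largest among $\{u_i : i \in S\}$, then $\mathcal{Q}(u^S_k)$ is the $k$-th largest among $\{\mathcal{Q}(u_i) : i\in S\} = \{v_i : i \in S\}$, i.e.\ $v^S_k = \mathcal{Q}(u^S_k)$ on this coupled probability space. Applying $\mathcal{F}$ to both sides and using $\mathcal{F}(\mathcal{Q}(x)) = x$ gives the final claim $\mathcal{F}(v^{\mathcal{N}}_k) = u^{\mathcal{N}}_k$ by taking $S = \mathcal{N}$ (this is the same identity already invoked in the proof of Lemma~\ref{lemma:conEW}).

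Finally I would invoke Assumption~\ref{Assump:Qfunc} pointwise at the random argument $x = u^S_k \in [0,1]$: since $Lx \le \mathcal{Q}(x) \le Ux$ holds for every $x\in[0,1]$, substituting $x = u^S_k$ and recalling $v^S_k = \mathcal{Q}(u^S_k)$ yields $L\,u^S_k \le v^S_k \le U\,u^S_k$ almost surely, which is the assertion of the lemma.

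The only point that merits careful statement rather than routine computation is the order-preservation step: one must be explicit that the same permutation sorts the $u_i$'s and the $v_i$'s because $\mathcal{Q}$ is monotone, so that ``$k$-th largest'' commutes with applying $\mathcal{Q}$. Since $\mathcal{F}$ is continuous, ties among the $v_i$ (equivalently among the $u_i$) occur with probability zero, so this causes no ambiguity; everything else is a direct substitution, and I expect no further obstacle.
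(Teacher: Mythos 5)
Your proof is correct and follows essentially the same route as the paper: both rest on the probability integral transform identity $v^S_k = \mathcal{Q}(u^S_k)$ (which the paper simply cites from David and Nagaraja, while you spell out the monotone order-preservation step), then apply Assumption~\ref{Assump:Qfunc} pointwise at $u^S_k$ and invert $\mathcal{F}$ for the final claim. No gaps.
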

\begin{proof}
The result is a consequence of the probability integral transform of continuous random variables. By equation $(2.3.7)$ of \cite{David_Nagaraja_2003}, $V^\textsc{s}_k = \mathsf{Q}(U^\textsc{s}_k)$.  $L U^\textsc{s}_k \leq V^\textsc{s}_k$ follows from Assumption \ref{Assump:Qfunc}. Using $\mathsf{Q}(\cdot) = \mathsf{F}^{-1}(\cdot)$ from Assumption \ref{assumption:iid}, we get $\mathsf{F}(V^\textsc{n}_{k}) = U^\textsc{n}_{k}$.
\end{proof}
Next, we construct two different minorants for expected welfare.
\begin{prop}[Minorant of Welfare] \label{Lemma:WelfareMinorant}
  Let Assumptions \ref{assumption:iid} and \ref{Assump:Qfunc} hold. Then,
  \begin{align*}
      &\textstyle \mathbb{E}\left[\mathtt{Wel}_{\textsc{C-PoP}}(\mathsf{N}, \mathsf{C}; p) \right]  \geq \mathbb{E}\big[\frac{L(1+U^{\mathsf{N}}_{k+1})}{2} \times \\
      &\textstyle \mathbb{E}\big[w_{\textsc{n}} + \sum_{j=1}^{r-w_{\textsc{n}}-1}j \mathbb{P}\big(w_{\textsc{c}}=j\big) + \sum_{j=r-w_{\textsc{n}}}^{c}\mathbb{P}\big(w_{\textsc{c}}=j\big)\frac{2(r-w_{\mathsf{N}})}{(1+U^{\mathsf{N}}_{k+1})} (1-\frac{(1-U^{\mathsf{N}}_{k+1})(r-w_{\mathsf{N}}+1)}{(j+1)}) \mid U^{\textsc{n}}_{k+1}\big] \big]. 
  \end{align*}
  where $w_{\textsc{n}} \sim \mathrm{Binom}\big(k,\frac{1-\mathsf{F}(p)}{1-U^{\textsc{n}}_{k+1}}\big)$, $w_{\textsc{c}} \sim \mathrm{Binom}\big(\mathsf{C},1-\mathsf{F}(p)\big)$ and \(U^{\mathsf{N}}_{k+1} \sim \mathrm{Beta}(n-k,k+1)\).
\end{prop}
\begin{proof} 
By Proposition \ref{lemma:conEW},
\begin{equation*}
\begin{aligned}
& \textstyle\mathbb{E}\left[\mathtt{Wel}_{\textsc{C-PoP}}(\mathsf{N}, \mathsf{C}; p) \mid p \ge b^{\mathsf{N}}_{k+1}, \overline{\mathsf{B}^{\textsc{n}}_{k}}\right] \\
&= \textstyle \mathbb{E}\big[w_{\textsc{n}}\big]\mathbb{E}\big[ \mathsf{Q}(U)\ |\ U > \mathsf{F}(p)  \big] +  \mathbb{E}\big[\sum_{j=1}^{r-w_{\textsc{n}}-1}j \mathbb{P}\big(w_{\textsc{c}}=j\big)\mathbb{E}\big[ \mathsf{Q}(U) \mid  U > \mathsf{F}(p)\big] \big] \\
& \textstyle \quad+ \mathbb{E}\big[\sum_{j=r-w_{\textsc{n}}}^{c}\mathbb{P}\big(w_{\textsc{c}}=j\big)\mathbb{E}\big[\sum_{i=1}^{r-w_{\textsc{n}}}  \mathsf{Q}(U_i^j) \mid  U_j^j > \mathsf{F}(p)\big] \big],
\end{aligned}\end{equation*}
where \(U \sim \mathrm{Uniform}[0,1]\), $w_{\textsc{n}} \sim \mathrm{Binom}\big(k,\frac{1-\mathsf{F}(p)}{1-\mathsf{F}(V^{\textsc{n}}_{k+1})}\big)$, and $w_{\textsc{c}} \sim \mathrm{Binom}\big(\mathsf{C},1-\mathsf{F}(p)\big)$.
Using Assumption \ref{Assump:Qfunc} and Lemma \ref{Lemma:LUbound}, we bound the above terms as follows.
\begin{align*}
& \textstyle \mathbb{E}\big[w_{\textsc{n}}\big]\mathbb{E}\big[ \mathsf{Q}(U)\ |\ U > \mathsf{F}(p)  \big] +  \mathbb{E}\big[\sum_{j=1}^{r-w_{\textsc{n}}-1}j \mathbb{P}\big(w_{\textsc{c}}=j\big)\mathbb{E}\big[ \mathsf{Q}(U) \mid  U > \mathsf{F}(p)\big] \big] \\
& \textstyle \quad+ \mathbb{E}\big[\sum_{j=r-w_{\textsc{n}}}^{c}\mathbb{P}\big(w_{\textsc{c}}=j\big)\mathbb{E}\big[\sum_{i=1}^{r-w_{\textsc{n}}}  \mathsf{Q}(U_i^j) \mid  U_j^j > \mathsf{F}(p)\big] \big]\\
& \ge \textstyle \mathbb{E}\big[w_{\textsc{n}}\big]\mathbb{E}\big[ LU\ |\ U > U^{\mathsf{N}}_{k+1}  \big] +  \mathbb{E}\big[\sum_{j=1}^{r-w_{\textsc{n}}-1}j \mathbb{P}\big(w_{\textsc{c}}=j\big)\mathbb{E}\big[ LU \mid  U > U^{\mathsf{N}}_{k+1}\big] \big] \\
& \textstyle \quad+ \mathbb{E}\big[\sum_{j=r-w_{\textsc{n}}}^{c}\mathbb{P}\big(w_{\textsc{c}}=j\big)\mathbb{E}\big[\sum_{i=1}^{r-w_{\textsc{n}}}  LU_i^j \mid  U_j^j > U^{\mathsf{N}}_{k+1}\big] \big] \\
& \ge \textstyle \frac{L(1+U^{\mathsf{N}}_{k+1})}{2}\mathbb{E}\big[w_{\textsc{n}} + \sum_{j=1}^{r-w_{\textsc{n}}-1}j \mathbb{P}\big(w_{\textsc{c}}=j\big)\big] \\
& \textstyle \quad+ L\mathbb{E}\big[\sum_{j=r-w_{\textsc{n}}}^{c}\mathbb{P}\big(w_{\textsc{c}}=j\big)(r-w_{\mathsf{N}}) (1-\frac{(1-U^{\mathsf{N}}_{k+1})(r-w_{\mathsf{N}}+1)}{2(j+1)}) \big] \\
= &  \textstyle \mathbb{E}\big[w_{\textsc{n}} + \sum_{j=1}^{r-w_{\textsc{n}}-1}j \mathbb{P}\big(w_{\textsc{c}}=j\big) + \sum_{j=r-w_{\textsc{n}}}^{c}\mathbb{P}\big(w_{\textsc{c}}=j\big)\frac{2(r-w_{\mathsf{N}})}{(1+U^{\mathsf{N}}_{k+1})} (1-\frac{(1-U^{\mathsf{N}}_{k+1})(r-w_{\mathsf{N}}+1)}{(j+1)}) \big] \times \\
& \textstyle \quad \frac{L(1+U^{\mathsf{N}}_{k+1})}{2}
\end{align*}
The second inequality is because $Q(U) = V \geq LU$ by Lemma \ref{Lemma:LUbound} and $F(p) \ge F(V^{\textsc{n}}_{k+1}) = U^{\textsc{n}}_{k+1}$. The third equality is using $\mathbb{E}\left[U \mid U \geq U^{\textsc{n}}_{k+1}\right] = (1+U^{\textsc{n}}_{k+1})/2$ for $U \sim \text{Uniform}[0,1]$, and $\mathbb{E}\left[U^q_i \mid U^q_q \geq U^{\textsc{n}}_{k+1}\right] = \left(1 -U^{\textsc{n}}_{k+1}\right) \mathbb{E}\left[U^q_i\right] + U^{\textsc{n}}_{k+1}$. The last inequality is because the summation of expected values of uniform order statistics $\sum_{i=1}^{r-k}\mathbb{E}\left[U^q_i\right] = {\scriptscriptstyle (r-k) \frac{(2q -r+k+1)}{2(q+1)}}$ when $r-k \leq q$. 
Finally, note that $U^\textsc{n}_{k+1} \sim \text{Beta}\left(N-k, k+1\right)$ and is $(k+1)-$th highest order statistic from $N$ i.i.d. standard uniform random variables.
\end{proof}

Since the utility of all bidders and the auctioneer is non-negative, $\mathtt{Wel}_{\textsc{C-PoP}}(\mathsf{N}, \mathsf{C}; p) \geq \mathtt{Rev}_{\textsc{C-PoP}}(\mathsf{N},\mathsf{C}; p)$. Hence, a minorant of the revenue is also a valid minorant of the welfare.
\begin{prop}[Minorant of Revenue] \label{Lemma:RevenueMinorant}
  Let Assumptions \ref{assumption:iid} and \ref{Assump:Qfunc} hold. Then, 
  \begin{align*}
      &\textstyle \mathbb{E}\left[\mathtt{Rev}_{\textsc{C-PoP}}(\mathsf{N}, \mathsf{C}; p)\right] \geq \mathbb{E}\big[LU^{\mathsf{N}}_{k+1} \mathbb{E}\big[w_{\textsc{n}} + \sum_{j=1}^{c}\min \{j, r-w_{\textsc{n}}\} \mathbb{P}\big(w_{\textsc{c}}=j\big) \mid  U^{\textsc{n}}_{k+1}\big] \big] 
  \end{align*}
  where $w_{\textsc{n}} \sim \mathrm{Binom}\big(k,\frac{1-\mathsf{F}(p)}{1-U^{\textsc{n}}_{k+1}}\big)$, $w_{\textsc{c}} \sim \mathrm{Binom}\big(\mathsf{C},1-\mathsf{F}(p)\big)$ and \(U^{\mathsf{N}}_{k+1} \sim \mathrm{Beta}(n-k,k+1)\).
\end{prop}
\begin{proof}
The calculations are very similar to Proposition \ref{Lemma:WelfareMinorant}.
\end{proof}

\section{Appendix: Numerical Experiments}

Simulations were performed on a 2023 MacBook Pro running macOS Sonoma 14.4, with 8GB of unified memory and an Apple M3 chip (8-core CPU). The simulations were implemented in Python 3. Here we display our results for the same distributions and ranges of $T$, but now with $r = 15$ items instead of $10$ items.

\begin{figure}[H]
    \begin{subfigure}{0.48\textwidth}
        \centering
        \includegraphics[width=\linewidth]{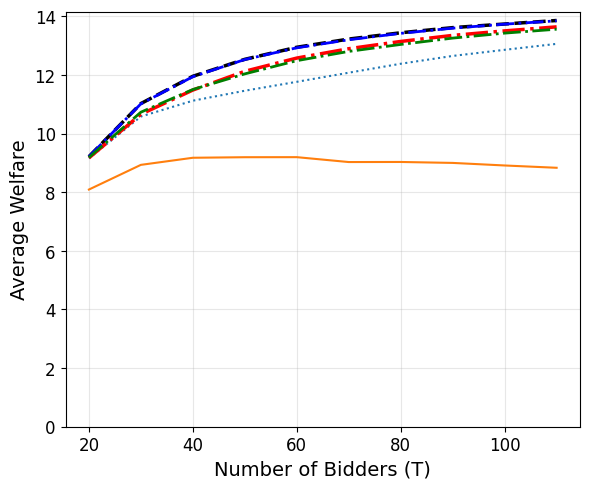}
        \caption{Welfare (Uniform)}
        \label{fig:welfare_10_U}
    \end{subfigure}
    \hfill
    \begin{subfigure}{0.48\textwidth}
        \centering
        \includegraphics[width=\linewidth]{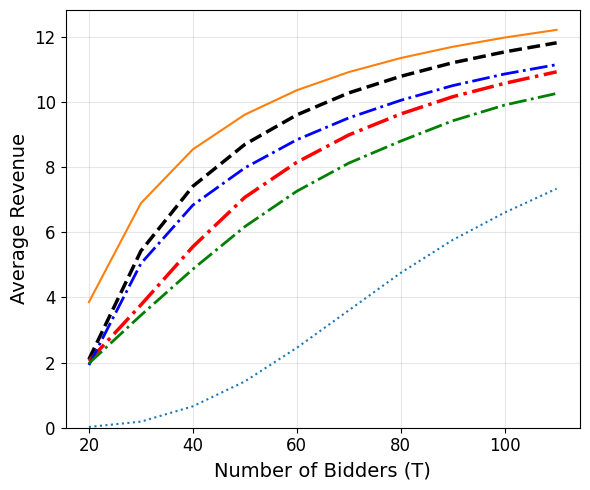}
        \caption{Revenue (Uniform)}
        \label{fig:revenue_10_U}
    \end{subfigure}
\begin{subfigure}{0.48\textwidth}
        \centering
        \includegraphics[width=\linewidth]{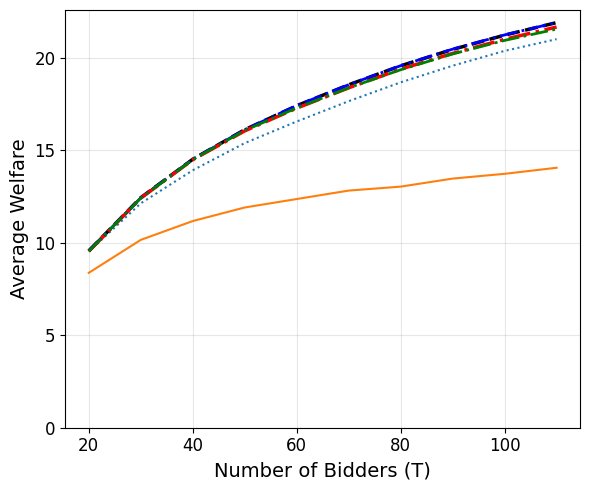}
        \caption{Welfare (Exponential)}
        \label{fig:welfare_10}
    \end{subfigure}
    \hfill
    \begin{subfigure}{0.48\textwidth}
        \centering
        \includegraphics[width=\linewidth]{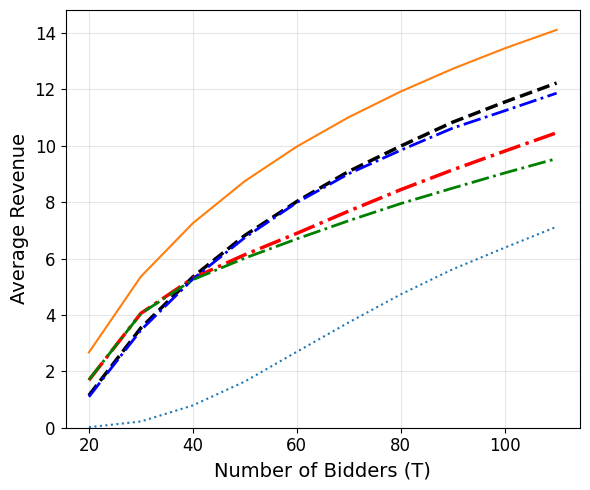}
        \caption{Revenue (Exponential)}
        \label{fig:revenue_10}
    \end{subfigure}
    \caption{Numerical results for Uniform(0,1) and Exponential ($\lambda$ = 2) distributions where $r = 15$. The solid orange curve is Posted Price with collusion, the dotted light blue curve is VCG with collusion, the long dash-dotted red curve is C-PoP with the firm specific classifier, the medium dash-dotted green curve is C-PoP with the 50/50 classifier, the short dash-dotted blue curve is C-PoP with the random classifier, and the dashed black curve is C-PoP with perfect classification.}
\end{figure}

The following details the standard errors (SE) and means for all of our experiments. All experiments are conducted using 1000 simulations.

\begin{table}[H]
\centering
\caption{Revenue (mean $\pm$ SE) --- $r=10$, Uniform distribution}
\label{tab:revenue_r10_U}
\begin{adjustbox}{max width=\textwidth}
\begin{tabular}{lcccccccccc}
\toprule
Mechanism & $T=20$ & $T=30$ & $T=40$ & $T=50$ & $T=60$ & $T=70$ & $T=80$ & $T=90$ & $T=100$ & $T=110$ \\
\midrule
VCG w/ Collusion & $0.17 \pm 0.01$ & $0.77 \pm 0.04$ & $1.68 \pm 0.05$ & $2.68 \pm 0.06$ & $3.66 \pm 0.06$ & $4.52 \pm 0.06$ & $5.15 \pm 0.05$ & $5.64 \pm 0.05$ & $6.09 \pm 0.04$ & $6.44 \pm 0.04$ \\
Posted Price & $4.50 \pm 0.00$ & $5.96 \pm 0.00$ & $6.78 \pm 0.00$ & $7.30 \pm 0.00$ & $7.67 \pm 0.00$ & $7.95 \pm 0.00$ & $8.17 \pm 0.00$ & $8.34 \pm 0.00$ & $8.48 \pm 0.00$ & $8.59 \pm 0.00$ \\
C-PoP (perfect) & $3.35 \pm 0.02$ & $5.15 \pm 0.02$ & $6.18 \pm 0.01$ & $6.84 \pm 0.01$ & $7.29 \pm 0.01$ & $7.62 \pm 0.01$ & $7.88 \pm 0.01$ & $8.08 \pm 0.01$ & $8.25 \pm 0.01$ & $8.39 \pm 0.01$ \\
C-PoP (firm size) & $2.56 \pm 0.02$ & $4.11 \pm 0.02$ & $5.25 \pm 0.02$ & $6.03 \pm 0.02$ & $6.58 \pm 0.02$ & $7.00 \pm 0.01$ & $7.32 \pm 0.01$ & $7.57 \pm 0.01$ & $7.80 \pm 0.01$ & $7.97 \pm 0.01$ \\
C-PoP (random) & $2.93 \pm 0.03$ & $4.71 \pm 0.03$ & $5.67 \pm 0.02$ & $6.30 \pm 0.01$ & $6.78 \pm 0.01$ & $7.16 \pm 0.01$ & $7.44 \pm 0.00$ & $7.67 \pm 0.00$ & $7.86 \pm 0.00$ & $8.03 \pm 0.00$ \\
C-PoP (50/50) & $2.41 \pm 0.02$ & $3.65 \pm 0.02$ & $4.72 \pm 0.02$ & $5.45 \pm 0.02$ & $6.07 \pm 0.02$ & $6.54 \pm 0.02$ & $6.90 \pm 0.01$ & $7.21 \pm 0.01$ & $7.44 \pm 0.01$ & $7.64 \pm 0.01$ \\
\bottomrule
\end{tabular}
\end{adjustbox}
\end{table}

\begin{table}[H]
\centering
\caption{Welfare (mean $\pm$ SE) --- $r=10$, Uniform distribution}
\label{tab:welfare_r10_U}
\begin{adjustbox}{max width=\textwidth}
\begin{tabular}{lcccccccccc}
\toprule
Mechanism & $T=20$ & $T=30$ & $T=40$ & $T=50$ & $T=60$ & $T=70$ & $T=80$ & $T=90$ & $T=100$ & $T=110$ \\
\midrule
VCG w/ Collusion & $7.08 \pm 0.02$ & $7.62 \pm 0.02$ & $7.92 \pm 0.02$ & $8.18 \pm 0.02$ & $8.44 \pm 0.02$ & $8.65 \pm 0.02$ & $8.81 \pm 0.01$ & $8.93 \pm 0.01$ & $9.04 \pm 0.01$ & $9.12 \pm 0.01$ \\
Posted Price & $5.95 \pm 0.03$ & $6.16 \pm 0.04$ & $6.23 \pm 0.04$ & $6.07 \pm 0.05$ & $6.14 \pm 0.05$ & $6.10 \pm 0.05$ & $6.03 \pm 0.05$ & $6.02 \pm 0.05$ & $6.02 \pm 0.06$ & $5.95 \pm 0.06$ \\
C-PoP (perfect) & $7.31 \pm 0.02$ & $8.12 \pm 0.02$ & $8.55 \pm 0.02$ & $8.82 \pm 0.01$ & $9.00 \pm 0.01$ & $9.13 \pm 0.01$ & $9.23 \pm 0.01$ & $9.31 \pm 0.01$ & $9.38 \pm 0.01$ & $9.43 \pm 0.01$ \\
C-PoP (firm size) & $7.13 \pm 0.02$ & $7.90 \pm 0.02$ & $8.35 \pm 0.02$ & $8.65 \pm 0.01$ & $8.86 \pm 0.01$ & $9.01 \pm 0.01$ & $9.12 \pm 0.01$ & $9.21 \pm 0.01$ & $9.29 \pm 0.01$ & $9.34 \pm 0.01$ \\
C-PoP (random) & $7.31 \pm 0.02$ & $8.15 \pm 0.02$ & $8.57 \pm 0.01$ & $8.84 \pm 0.01$ & $9.02 \pm 0.01$ & $9.15 \pm 0.01$ & $9.26 \pm 0.01$ & $9.33 \pm 0.01$ & $9.40 \pm 0.01$ & $9.45 \pm 0.01$ \\
C-PoP (50/50) & $7.19 \pm 0.02$ & $7.90 \pm 0.02$ & $8.33 \pm 0.02$ & $8.61 \pm 0.01$ & $8.82 \pm 0.01$ & $8.97 \pm 0.01$ & $9.10 \pm 0.01$ & $9.20 \pm 0.01$ & $9.27 \pm 0.01$ & $9.32 \pm 0.01$ \\
\bottomrule
\end{tabular}
\end{adjustbox}
\end{table}

\begin{table}[H]
\centering
\caption{Revenue (mean $\pm$ SE) --- $r=15$, Uniform distribution}
\label{tab:revenue_r15_U}
\begin{adjustbox}{max width=\textwidth}
\begin{tabular}{lcccccccccc}
\toprule
Mechanism & $T=20$ & $T=30$ & $T=40$ & $T=50$ & $T=60$ & $T=70$ & $T=80$ & $T=90$ & $T=100$ & $T=110$ \\
\midrule
VCG w/ Collusion & $0.02 \pm 0.00$ & $0.19 \pm 0.01$ & $0.66 \pm 0.03$ & $1.42 \pm 0.05$ & $2.45 \pm 0.07$ & $3.60 \pm 0.08$ & $4.75 \pm 0.08$ & $5.76 \pm 0.08$ & $6.60 \pm 0.08$ & $7.34 \pm 0.07$ \\
Posted Price & $3.86 \pm 0.00$ & $6.89 \pm 0.00$ & $8.55 \pm 0.00$ & $9.61 \pm 0.00$ & $10.36 \pm 0.00$ & $10.92 \pm 0.00$ & $11.35 \pm 0.00$ & $11.70 \pm 0.00$ & $11.98 \pm 0.00$ & $12.22 \pm 0.00$ \\
C-PoP (perfect) & $2.10 \pm 0.01$ & $5.42 \pm 0.01$ & $7.41 \pm 0.01$ & $8.70 \pm 0.01$ & $9.60 \pm 0.01$ & $10.28 \pm 0.01$ & $10.79 \pm 0.01$ & $11.21 \pm 0.01$ & $11.54 \pm 0.01$ & $11.82 \pm 0.01$ \\
C-PoP (firm size) & $2.05 \pm 0.01$ & $3.77 \pm 0.02$ & $5.56 \pm 0.03$ & $7.07 \pm 0.03$ & $8.15 \pm 0.03$ & $8.99 \pm 0.02$ & $9.63 \pm 0.02$ & $10.16 \pm 0.02$ & $10.57 \pm 0.02$ & $10.93 \pm 0.02$ \\
C-PoP (random) & $1.93 \pm 0.02$ & $5.04 \pm 0.02$ & $6.83 \pm 0.02$ & $7.98 \pm 0.02$ & $8.84 \pm 0.01$ & $9.51 \pm 0.01$ & $10.05 \pm 0.01$ & $10.50 \pm 0.01$ & $10.86 \pm 0.01$ & $11.15 \pm 0.01$ \\
C-PoP (50/50) & $1.98 \pm 0.02$ & $3.45 \pm 0.02$ & $4.87 \pm 0.02$ & $6.17 \pm 0.03$ & $7.26 \pm 0.03$ & $8.12 \pm 0.02$ & $8.80 \pm 0.02$ & $9.42 \pm 0.02$ & $9.91 \pm 0.02$ & $10.26 \pm 0.02$ \\
\bottomrule
\end{tabular}
\end{adjustbox}
\end{table}

\begin{table}[H]
\centering
\caption{Welfare (mean $\pm$ SE) --- $r=15$, Uniform distribution}
\label{tab:welfare_r15_U}
\begin{adjustbox}{max width=\textwidth}
\begin{tabular}{lcccccccccc}
\toprule
Mechanism & $T=20$ & $T=30$ & $T=40$ & $T=50$ & $T=60$ & $T=70$ & $T=80$ & $T=90$ & $T=100$ & $T=110$ \\
\midrule
VCG w/ Collusion & $9.15 \pm 0.03$ & $10.58 \pm 0.03$ & $11.12 \pm 0.03$ & $11.46 \pm 0.03$ & $11.77 \pm 0.03$ & $12.08 \pm 0.03$ & $12.38 \pm 0.02$ & $12.65 \pm 0.02$ & $12.86 \pm 0.02$ & $13.06 \pm 0.02$ \\
Posted Price & $8.09 \pm 0.03$ & $8.93 \pm 0.04$ & $9.17 \pm 0.05$ & $9.19 \pm 0.06$ & $9.20 \pm 0.06$ & $9.03 \pm 0.07$ & $9.03 \pm 0.07$ & $9.00 \pm 0.07$ & $8.91 \pm 0.07$ & $8.83 \pm 0.07$ \\
C-PoP (perfect) & $9.21 \pm 0.04$ & $11.02 \pm 0.03$ & $11.95 \pm 0.02$ & $12.53 \pm 0.02$ & $12.94 \pm 0.02$ & $13.22 \pm 0.02$ & $13.44 \pm 0.02$ & $13.61 \pm 0.01$ & $13.74 \pm 0.01$ & $13.86 \pm 0.01$ \\
C-PoP (firm size) & $9.17 \pm 0.04$ & $10.66 \pm 0.03$ & $11.48 \pm 0.03$ & $12.12 \pm 0.02$ & $12.58 \pm 0.02$ & $12.89 \pm 0.02$ & $13.14 \pm 0.02$ & $13.35 \pm 0.01$ & $13.50 \pm 0.01$ & $13.64 \pm 0.01$ \\
C-PoP (random) & $9.21 \pm 0.04$ & $11.01 \pm 0.03$ & $11.94 \pm 0.02$ & $12.53 \pm 0.02$ & $12.93 \pm 0.02$ & $13.20 \pm 0.02$ & $13.42 \pm 0.01$ & $13.60 \pm 0.01$ & $13.73 \pm 0.01$ & $13.85 \pm 0.01$ \\
C-PoP (50/50) & $9.18 \pm 0.04$ & $10.73 \pm 0.03$ & $11.50 \pm 0.03$ & $12.04 \pm 0.02$ & $12.49 \pm 0.02$ & $12.80 \pm 0.02$ & $13.04 \pm 0.02$ & $13.26 \pm 0.01$ & $13.43 \pm 0.01$ & $13.56 \pm 0.01$ \\
\bottomrule
\end{tabular}
\end{adjustbox}
\end{table}

\begin{table}[H]
\centering
\caption{Revenue (mean $\pm$ SE) --- $r=10$, Exponential distribution}
\label{tab:revenue_r10_E}
\begin{adjustbox}{max width=\textwidth}
\begin{tabular}{lcccccccccc}
\toprule
Mechanism & $T=20$ & $T=30$ & $T=40$ & $T=50$ & $T=60$ & $T=70$ & $T=80$ & $T=90$ & $T=100$ & $T=110$ \\
\midrule
VCG w/ Collusion & $0.17 \pm 0.01$ & $0.83 \pm 0.03$ & $1.75 \pm 0.04$ & $2.68 \pm 0.05$ & $3.52 \pm 0.05$ & $4.24 \pm 0.05$ & $4.87 \pm 0.06$ & $5.47 \pm 0.06$ & $6.01 \pm 0.06$ & $6.45 \pm 0.06$ \\
Posted Price & $3.57 \pm 0.00$ & $5.31 \pm 0.00$ & $6.55 \pm 0.01$ & $7.54 \pm 0.01$ & $8.36 \pm 0.01$ & $9.04 \pm 0.01$ & $9.64 \pm 0.01$ & $10.19 \pm 0.01$ & $10.67 \pm 0.01$ & $11.10 \pm 0.02$ \\
C-PoP (perfect) & $2.14 \pm 0.01$ & $3.85 \pm 0.01$ & $5.14 \pm 0.01$ & $6.15 \pm 0.01$ & $6.95 \pm 0.01$ & $7.65 \pm 0.01$ & $8.30 \pm 0.01$ & $8.85 \pm 0.01$ & $9.32 \pm 0.02$ & $9.76 \pm 0.02$ \\
C-PoP (firm size) & $2.64 \pm 0.01$ & $3.76 \pm 0.01$ & $4.55 \pm 0.01$ & $5.30 \pm 0.02$ & $5.98 \pm 0.02$ & $6.64 \pm 0.02$ & $7.23 \pm 0.02$ & $7.78 \pm 0.02$ & $8.27 \pm 0.02$ & $8.70 \pm 0.03$ \\
C-PoP (random) & $2.01 \pm 0.02$ & $3.76 \pm 0.02$ & $5.03 \pm 0.02$ & $6.03 \pm 0.01$ & $6.77 \pm 0.01$ & $7.42 \pm 0.01$ & $7.95 \pm 0.01$ & $8.43 \pm 0.01$ & $8.85 \pm 0.01$ & $9.22 \pm 0.01$ \\
C-PoP (50/50) & $2.63 \pm 0.01$ & $3.64 \pm 0.01$ & $4.37 \pm 0.01$ & $5.00 \pm 0.02$ & $5.57 \pm 0.02$ & $6.08 \pm 0.02$ & $6.58 \pm 0.02$ & $7.00 \pm 0.02$ & $7.42 \pm 0.02$ & $7.83 \pm 0.02$ \\
\bottomrule
\end{tabular}
\end{adjustbox}
\end{table}

\begin{table}[h]
\centering
\caption{Welfare (mean $\pm$ SE) --- $r=10$, Exponential distribution}
\label{tab:welfare_r10_E}
\begin{adjustbox}{max width=\textwidth}
\begin{tabular}{lcccccccccc}
\toprule
Mechanism & $T=20$ & $T=30$ & $T=40$ & $T=50$ & $T=60$ & $T=70$ & $T=80$ & $T=90$ & $T=100$ & $T=110$ \\
\midrule
VCG w/ Collusion & $8.13 \pm 0.06$ & $9.91 \pm 0.06$ & $11.22 \pm 0.07$ & $12.27 \pm 0.07$ & $13.14 \pm 0.06$ & $13.87 \pm 0.07$ & $14.52 \pm 0.06$ & $15.10 \pm 0.07$ & $15.62 \pm 0.06$ & $16.13 \pm 0.07$ \\
Posted Price & $6.71 \pm 0.05$ & $7.78 \pm 0.07$ & $8.27 \pm 0.09$ & $8.65 \pm 0.10$ & $8.86 \pm 0.11$ & $9.15 \pm 0.12$ & $9.39 \pm 0.13$ & $9.62 \pm 0.14$ & $9.66 \pm 0.14$ & $9.67 \pm 0.15$ \\
C-PoP (perfect) & $8.28 \pm 0.06$ & $10.24 \pm 0.06$ & $11.64 \pm 0.07$ & $12.69 \pm 0.07$ & $13.56 \pm 0.07$ & $14.29 \pm 0.07$ & $14.96 \pm 0.07$ & $15.53 \pm 0.07$ & $16.04 \pm 0.07$ & $16.55 \pm 0.07$ \\
C-PoP (firm size) & $8.25 \pm 0.06$ & $10.22 \pm 0.06$ & $11.59 \pm 0.06$ & $12.61 \pm 0.06$ & $13.46 \pm 0.06$ & $14.18 \pm 0.06$ & $14.84 \pm 0.06$ & $15.42 \pm 0.06$ & $15.92 \pm 0.06$ & $16.43 \pm 0.06$ \\
C-PoP (random) & $8.27 \pm 0.06$ & $10.25 \pm 0.06$ & $11.66 \pm 0.07$ & $12.71 \pm 0.07$ & $13.59 \pm 0.07$ & $14.34 \pm 0.06$ & $15.02 \pm 0.06$ & $15.58 \pm 0.06$ & $16.10 \pm 0.06$ & $16.60 \pm 0.06$ \\
C-PoP (50/50) & $8.25 \pm 0.06$ & $10.24 \pm 0.06$ & $11.60 \pm 0.06$ & $12.62 \pm 0.06$ & $13.47 \pm 0.06$ & $14.15 \pm 0.06$ & $14.79 \pm 0.06$ & $15.37 \pm 0.06$ & $15.85 \pm 0.06$ & $16.36 \pm 0.06$ \\
\bottomrule
\end{tabular}
\end{adjustbox}
\end{table}

\begin{table}[H]
\centering
\caption{Revenue (mean $\pm$ SE) --- $r=15$, Exponential distribution}
\label{tab:revenue_r15_E}
\begin{adjustbox}{max width=\textwidth}
\begin{tabular}{lcccccccccc}
\toprule
Mechanism & $T=20$ & $T=30$ & $T=40$ & $T=50$ & $T=60$ & $T=70$ & $T=80$ & $T=90$ & $T=100$ & $T=110$ \\
\midrule
VCG w/ Collusion & $0.02 \pm 0.00$ & $0.22 \pm 0.01$ & $0.79 \pm 0.03$ & $1.64 \pm 0.05$ & $2.69 \pm 0.06$ & $3.74 \pm 0.06$ & $4.73 \pm 0.07$ & $5.62 \pm 0.07$ & $6.39 \pm 0.07$ & $7.13 \pm 0.07$ \\
Posted Price & $2.67 \pm 0.00$ & $5.36 \pm 0.00$ & $7.25 \pm 0.01$ & $8.75 \pm 0.00$ & $9.97 \pm 0.00$ & $11.02 \pm 0.00$ & $11.93 \pm 0.00$ & $12.73 \pm 0.00$ & $13.46 \pm 0.00$ & $14.11 \pm 0.00$ \\
C-PoP (perfect) & $1.15 \pm 0.01$ & $3.54 \pm 0.01$ & $5.35 \pm 0.01$ & $6.82 \pm 0.01$ & $8.02 \pm 0.01$ & $9.09 \pm 0.01$ & $9.99 \pm 0.01$ & $10.84 \pm 0.01$ & $11.55 \pm 0.01$ & $12.23 \pm 0.01$ \\
C-PoP (firm size) & $1.69 \pm 0.01$ & $4.06 \pm 0.01$ & $5.30 \pm 0.02$ & $6.14 \pm 0.02$ & $6.89 \pm 0.02$ & $7.68 \pm 0.02$ & $8.44 \pm 0.02$ & $9.15 \pm 0.02$ & $9.81 \pm 0.02$ & $10.46 \pm 0.02$ \\
C-PoP (random) & $1.10 \pm 0.01$ & $3.46 \pm 0.02$ & $5.27 \pm 0.02$ & $6.74 \pm 0.02$ & $7.99 \pm 0.01$ & $9.01 \pm 0.01$ & $9.84 \pm 0.01$ & $10.63 \pm 0.01$ & $11.24 \pm 0.01$ & $11.87 \pm 0.01$ \\
C-PoP (50/50) & $1.70 \pm 0.01$ & $4.05 \pm 0.01$ & $5.25 \pm 0.02$ & $6.02 \pm 0.02$ & $6.70 \pm 0.02$ & $7.34 \pm 0.02$ & $7.95 \pm 0.02$ & $8.50 \pm 0.02$ & $9.04 \pm 0.02$ & $9.55 \pm 0.02$ \\
\bottomrule
\end{tabular}
\end{adjustbox}
\end{table}

\begin{table}[H]
\centering
\caption{Welfare (mean $\pm$ SE) --- $r=15$, Exponential distribution}
\label{tab:welfare_r15_E}
\begin{adjustbox}{max width=\textwidth}
\begin{tabular}{lcccccccccc}
\toprule
Mechanism & $T=20$ & $T=30$ & $T=40$ & $T=50$ & $T=60$ & $T=70$ & $T=80$ & $T=90$ & $T=100$ & $T=110$ \\
\midrule
VCG w/ Collusion & $9.56 \pm 0.07$ & $12.27 \pm 0.08$ & $14.08 \pm 0.08$ & $15.54 \pm 0.08$ & $16.80 \pm 0.08$ & $17.86 \pm 0.08$ & $18.87 \pm 0.08$ & $19.75 \pm 0.08$ & $20.54 \pm 0.08$ & $21.27 \pm 0.08$ \\
Posted Price & $8.39 \pm 0.06$ & $10.08 \pm 0.07$ & $10.91 \pm 0.08$ & $11.68 \pm 0.10$ & $12.21 \pm 0.11$ & $12.70 \pm 0.13$ & $12.85 \pm 0.14$ & $13.10 \pm 0.16$ & $13.20 \pm 0.16$ & $13.54 \pm 0.18$ \\
C-PoP (perfect) & $9.60 \pm 0.07$ & $12.58 \pm 0.08$ & $14.66 \pm 0.08$ & $16.31 \pm 0.08$ & $17.64 \pm 0.08$ & $18.74 \pm 0.08$ & $19.71 \pm 0.08$ & $20.59 \pm 0.08$ & $21.39 \pm 0.08$ & $22.08 \pm 0.08$ \\
C-PoP (firm size) & $9.57 \pm 0.07$ & $12.55 \pm 0.08$ & $14.65 \pm 0.08$ & $16.26 \pm 0.08$ & $17.52 \pm 0.08$ & $18.58 \pm 0.08$ & $19.54 \pm 0.08$ & $20.38 \pm 0.08$ & $21.14 \pm 0.08$ & $21.85 \pm 0.08$ \\
C-PoP (random) & $9.60 \pm 0.07$ & $12.57 \pm 0.08$ & $14.67 \pm 0.08$ & $16.31 \pm 0.08$ & $17.65 \pm 0.08$ & $18.75 \pm 0.08$ & $19.73 \pm 0.08$ & $20.61 \pm 0.08$ & $21.40 \pm 0.08$ & $22.09 \pm 0.08$ \\
C-PoP (50/50) & $9.57 \pm 0.07$ & $12.55 \pm 0.08$ & $14.66 \pm 0.08$ & $16.28 \pm 0.08$ & $17.55 \pm 0.08$ & $18.57 \pm 0.08$ & $19.52 \pm 0.08$ & $20.34 \pm 0.08$ & $21.11 \pm 0.08$ & $21.76 \pm 0.08$ \\
\bottomrule
\end{tabular}
\end{adjustbox}
\end{table}





\end{document}